\pgfplotsset{compat=1.13}
\newcommand{\E}{\mathbb{E}}
\newcommand{\mc}[1]{\mathcal{#1}}
\newcommand{\mb}[1]{\mathbf{#1}}
\DeclareMathOperator*{\argmax}{arg\,max}
\newcommand{\tc}{\,|\,}
\newcommand{\AAB}[1]{\textcolor{blue}{{#1}}}
\newcommand{\MM}[1]{\textcolor{blue}{{#1}}}
\newcommand{\FS}[1]{\textcolor{blue}{{#1}}}
\theoremstyle{plain}
\newtheorem{theorem}{Theorem}
\begin{document}
\title{Power and subcarrier allocation in multicarrier NOMA-FD systems}
\author{\IEEEauthorblockN{Andrea Abrardo\IEEEauthorrefmark{1}, Marco Moretti\IEEEauthorrefmark{2}, Fabio Saggese\IEEEauthorrefmark{2}} \\
    \IEEEauthorblockA{\IEEEauthorrefmark{1}Dipartimento di Ingegneria dell'Informazione, University of Siena, Italy.} \\
    \IEEEauthorblockA{\IEEEauthorrefmark{2}Dipartimento di Ingegneria dell'Informazione, University of Pisa, Italy.}}
\maketitle

\begin{abstract}
In this paper, we study the problem of power and channel allocation for multicarrier non-orthogonal multiple access (NOMA) full duplex (FD) systems. In such a system there are multiple interfering users transmitting over the same channel and the allocation task is a non-convex and extremely challenging problem. The objective of our work is to propose a solution that is close to the theoretic optimum but is of limited complexity. Following a block coordinate descent approach, we propose two algorithms  based on the decomposition of the original allocation problem in  lower-complexity sub-problems, which can be solved in the Lagrangian dual domain with a great reduction of the computational load. Numerical results show the effectiveness of  approach we propose, which outperforms other schemes designed to address NOMA-FD allocation  and attains performance  similar to the optimal solution with much lower complexity.
\end{abstract}
\begin{IEEEkeywords}
Non-orthogonal multiple access, full-duplex, resource allocation, block coordinated descent.
\end{IEEEkeywords}

\section{Introduction} \label{Sec:Intro}

With the advent of 5G, the coming years will see an explosive growth of mobile data traffic and a  dramatic increase in the number of mobile devices, calling for the introduction of revolutionary wireless technologies to sustain the ever-increasing demand for bandwidth and services. Full duplex (FD) and non-orthogonal multiple access (NOMA) are among the most promising key enabling technologies for 5G cellular systems~\cite{Wong2017}. 

The FD paradigm allows uplink and downlink transmissions to occur simultaneously on the same frequency channel and has the potential of doubling the spectral efficiency of conventional half-duplex communication systems~\cite{Sabharwal14}, provided that a sufficiently large part of the self-interference (SI) is canceled. Self-interference cancellation is mostly done in the analog domain, employing specific hardware~\cite{Zhang2015} and, for the complexity of the operation, it is customary to implement  FD at the base station only.

NOMA, whose basic idea was introduced in~\cite{Saito13}, multiplexes several users on the same spectral resources by exploiting channel diversity and can greatly increase device connectivity in comparison to traditional orthogonal multiple access (OMA) schemes. Since allocating more than one user on the same channel leads to severe co-channel interference, NOMA receivers perform successive interference cancellation (SIC) to remove part of the interference. 
SIC  is based on iterative decoding and cancelation of  interference and this leads to different problems when considering the uplink and the downlink segments, respectively~\cite{Ding2016a}. In fact, in the uplink direction, all signals are received by the base station, which is able to decode and cancel the various data streams according a given order with no particular problems. In the downlink direction, each user receives its signal plus all the signals intended for the other users allocated on the same channel, so that each user has to deal with this (potentially catastrophic) interference.  Accordingly, when it is possible, the receiver aims to first cancel the interference plaguing its reception and then decode its signal. This leads to a set of constraints on the minimum rate seen by the receiver for each data stream to be canceled, which greatly complicates the receiver task. A deeper investigation of these constraints is made in the remainder of the paper.

\MM{The advantages of combining together NOMA and FD  with respect to conventional systems can be summarized in enhanced flexibility,  fairer access  and larger spectral efficiency \cite{Ding2018,Zhang2019,Nguyen2019}. }
\MM{\subsection{Related work and motivation}}
Because of their nature, FD and NOMA  have in common that, since they are severely affected by interference,  their performance heavily relies on channel and power optimization. In fact, efficient resource allocation allows to fully capture the multi-user diversity of the system and delivers great gains with respect to more mature and conventional technologies.

\MM{Indeed, the problem of power and subcarrier allocation has been intensely researched in the last two decades. It originally dates back to the early 2000s, with the first relevant paper, the seminal work \cite{Wong1999}, being  published in 1999. Nevertheless, the field of optimization for wireless communications is greatly evolved in all these years mirroring the growth experienced by wireless communications in general. 
A major breakthrough is represented by \cite{Yu2006}, which first showed that the non-convex channel allocation problem could be solved close to optimality in the Lagrangian dual domain. Since then, many other works have addressed the allocation problem in different scenarios and for different radio techniques, covering cross-layer solutions \cite{Su2008}, cognitive radio \cite{Xie2012}, small cell and heterogeneous networks \cite{Zhang2014}, cloud radio acces networks \cite{Peng2014} as well as MIMO systems \cite{Ng2012}, just to make few examples of very referenced works. Nevertheless, recent 5G-inspired systems, where radio resources are shared non-orthogonally among users, have opened entirely new fields to investigate, as it is witnessed by the large number of recent papers on resource allocation on D2D \cite{Liang2017,Wang2018,DellaPenda2019}, FD \cite{Zeng2018,Abrardo2018,Zhang2019a} and NOMA \cite{Zeng2019,Muhammed2019,Al-Eryani2019,Saggese2019}.}
\MM{Accordingly, the new set of problems requires the introduction of novel techniques and algorithms. In particular,  in this case the main challenge  is to constructively manage the inherent system multi-user interference arising from non-orthogonal access to the channel. Unfortunately, interference  makes the allocation problem not convex so that its solution requires the use of advanced and complex algorithms.   In \cite{Ding2018a} the authors have shown that the NOMA-FD approach is theoretically feasible and that can  yield large gains over half-duplex NOMA and orthogonal multiple access,  under the condition that the cochannel multi-user interference is properly managed.}

\MM{The resource allocation problem, central to the implementation of NOMA-FD systems, has been discussed in \cite{Sun2018,Sun2017,Nguyen2019a}. While \cite{Sun2018} considers multi-carrier NOMA, where co-channel users are assigned different codes, \cite{Sun2017} and \cite{Nguyen2019a} investigate power-domain NOMA, where users are only separated by their different levels of power at the receiver. Both these last two works first propose an optimal  benchmark, which has unpractical complexity because of  the nature of the problem,  and then study suboptimal  solutions with a lower computational complexity. For instance, \cite{Sun2017} finds the optimal channel and power allocation  by employing the framework of  \emph{monotonic optimization} and then proposes a simpler algorithm based on the continuous relaxation of the integral allocation variable, which is forced to assume integer values by  employing the \emph{barrier method}.  Although there are no theoretic limits to the number of users in a NOMA system,  this paper considers  a system where  two uplink and two downlink users are allocated on each frequency channel, as it is common for NOMA-FD single-input single-output (SISO) systems.  The  work in \cite{Nguyen2019a}, which studies a system where the base station is equipped with multiple antennas,  deals with the integer allocation variable in the same fashion: first it  relaxes the integral constraint,  so that the allocation variable can assume any value between zero and one and then the  problem is formulated in such a way that the relaxed variable is forced to take an integer value by introducing a large penalty for any  value different from zero and one. The performance of these heuristic methods tends to  depend  on the value of the penalty, which, if not properly chosen for each different scenario, may lead to results very far from the optimal solution. Moreover, the complexity of the proposed algorithms although much smaller than that the optimal solution is still prohibitively high.}

\MM{From a practical point of view the performance of NOMA-FD systems depends also on another aspect: availability of reliable channel state information (CSI). This topic is still in its infancy: in \cite{Aswathi2019} the authors studied the performance of a NOMA-FD system under the hypothesis of CSI errors and imperfect SIC and  \cite{Wei2017,Celik2019} deal with resource allocation with imperfect CSI in NOMA systems.}
%
%
\subsection{Paper Contributions}  
\MM{The focus of our work is to solve the problem of channel and resource allocation for a SISO FD-NOMA system with the objective of maximizing the weighted users' sum rate, subject to power constraints. This non-convex problem is particularly difficult to solve and the few practical algorithms presented in the literature have still very large complexity.  With respect to the existing literature, we follow a different path to address the problem of dealing with binary integer allocation variables.  In fact, monotonic optimization and integer relaxation are not the only techniques that can be employed to address such complex problems. Recent literature~\cite{Yu2006},~\cite{Xiao17} has investigated the advantages of solving non-convex allocation problems in the dual domain and sometimes a successful approach at dealing with large complex problems is to try to reduce them in a sequence of more tractable subproblems~\cite{moretti2015}.}

\MM{In detail, the main contribution of this paper is a globally convergent algorithm able to solve the NOMA-FD allocation problem to a (good) local optimum. Following a block coordinate descent approach~\cite{Xu2017}, the optimization problem is split in two steps that are solved sequentially and  iterated until convergence:
\begin{enumerate}
\item one uplink user and one downlink user are assigned in an FD fashion to each radio channel; 
\item a second user to each uplink and downlink channel are assigned in order to maximizing the global utility, enforcing \emph{de facto} the NOMA paradigm on each direction.
\end{enumerate}}

\FS{The advantage of this approach is twofold: a) the combined computational complexity of the two steps is much smaller than the complexity of the original problem, \AAB{and b) we are able to easily meet the specific NOMA constraint for user cancellation in the downlink.}}
\MM{The novelty of our work is to apply BCD, a classical algorithm for convex problems, to the solution of a highly nonconvex problem. To achieve this goal we slightly modify the original objective function by introducing a quadratic regularization term that helps to achieve the algorithm convergence.}

\FS{Moreover, even if our algorithm requires only a fraction of the complexity of the other schemes in literature, its complexity can still be prohibitive. 
Thus, to further reduce the total computational costs, we  propose a second algorithm, still based on BCD, with yet a lower complexity and similar performances. This new scheme is not iterative but employs the first two steps of the BCD algorithm to obtain the users allocation for both FD and NOMA paradigms and then recasts and solves the problem as a power allocation problem.}
Numerical simulations, which are carried out for several different scenarios, show that the proposed methods exhibits performance close to the maximum achievable optimum at a fraction of the complexity. 
\subsection{Outline}
The remainder of this paper is organized as follows. Section~\ref{sec:model} introduces the NOMA-FD system model; Section~\ref{sec:PF} presents the formulation of the weighted sum rate allocation problem. Section~\ref{sec:propscheme} describes the solution to the allocation problem based on the BCD algorithm and Section~\ref{sec:low-complexity} describe its low-complexity implementation. Finally, numerical results are illustrated in Section~\ref{sec:results} and conclusions are drawn in Section~\ref{sec:conclusions}.

\section{System Model} \label{sec:model}

We consider a NOMA-FD system, where single-antenna users are served by a single-antenna base station (BS). The available spectrum is partitioned into $F$ orthogonal fading channels collected in the set $\mathcal{F} = \{1,2,\dots,F\}$. 
We further assume that the propagation gains on each subchannel are constant for the time horizon of radio resource allocation.
We denote by $\mathcal{U}$ and $\mathcal{D}$ the sets of the $M = \left|\mathcal{U}\right|$ uplink and  $N = \left|\mathcal{D}\right|$ downlink users in the system, respectively. \MM{The set of all users is denoted as $\mc{A} = \mc{U} \cup \mc{D}$.} 
\MM{We denote by $s_{i}(f) \in \mathbb{C}$ the complex information symbol with unitary power of the $i$-th user on subcarrier $f$. The information symbols are assumed zero-mean independent and identically distributed (i.i.d) random variables, i.e., $\E\{{s}_{i}(f)\, {s}^{\rm H}_{n}(f')\} =1$ if $i = n$ and $f = f'$ and 0 otherwise ($s^{\rm H}$ is the complex conjugate of $s$).}
The transmitted signal for user $i$ on subcarrier $f$ is a scaled version of the information symbol with scaling factor $\sqrt{P_{i}(f)}$, so that the transmission power is $P_{i}(f)$. \MM{We stress the fact that, if $i \in \mc{U}$, $P_{i}(f)$ is the transmit power employed by user $i$ to transmit the signal towards the BS on subcarrier $f$, while, if $i \in \mc{D}$, $P_i(f)$ is the transmit power employed by the BS to transmit the signal toward user $i$ on subcarrier $f$.}

With respect to traditional systems, the FD and NOMA technologies are implemented as follows.
\begin{itemize}
\item The FD technology is implemented at the BS: by exploiting the knowledge of the transmitted downlink signal, the BS is able to cancel a large fraction of the self interference that it generates and  simultaneously transmit and receive on the same channel. On the contrary, due to hardware limitations, the mobile users operate at half duplex, i.e., they either transmit (uplink users) or receive (downlink users). 
\item Both the BS and the mobile users implement the NOMA paradigm and are able to cancel a certain number of  interfering users of the same type (uplink or downlink) on each channel. When $A$ users are allocated on a given channel, the receiver employs successive interference cancellation (SIC) to separate them. To perform SIC it is necessary that the signal to be cancelled is perfectly reconstructed  and this requires: a) perfect channel estimation, condition which we will assume true for the remainder of the paper, and b) no detection errors.
The condition b) is obtained only if the signal-to-interference-plus-noise ratio (SINR) of  the signal that needs to be cancelled is greater at the receiver where cancellation is performed than at the intended receiver. 
\end{itemize}

Due to the high level of interference deriving from the combination of the FD and NOMA paradigms, we will set \MM{$A = 2$ for the rest of the paper, as it is is customary for SISO NOMA-FD systems \cite{Sun2017}. In case of MIMO NOMA-FD, $N$ can be much larger than 2.}
\MM{To maximize the system performance, we assume that for both the downlink and uplink cases, SIC is implemented by one of the two users allocated on the same subcarrier, thus canceling the interference from the other one. For ease of notation, on each channel we will indicate with the term \emph{strong users} the couple of one uplink and one downlink user which cancel the interference of the other one, hereafter referred to as weak user. 
In accordance with the 
terminology,  we employ two sets of binary allocation variables for each subcarrier $f$: the allocation variables $x_{i,s}(f) \in \{0,1\}$, set to 1 only if the user $i \in \mc{A}$ is a strong user on $f$, and the allocation variables  $x_{i,w}(f) \in \{0,1\}$, set to 1 if the user $i \in \mc{A}$ is a weak user on $f$. 
According to the previous definitions, the set of uplink users allocated on subcarrier $f$ is $\mc{U}(f) = \set{ i \in \mc{U} \tc x_{i,s}(f) = 1 \text{ or } x_{i,w}(f) = 1}$. Similarly, the set of downlink users allocated on subcarrier $f$ is $\mc{D}(f) = \set{ i \in \mc{D} \tc x_{i,s}(f) = 1 \text{ or } x_{i,w}(f) = 1}$. 
We further define $\mc{A}(f) = \mc{U}(f) \cup \mc{D}(f)$.
Now, we can formalize the signal received for the uplink and downlink directions.}

\subsection{Uplink users} 
Assuming perfect cancellation of the weak users, the strong user signal is  received without any interference of other uplink users. Hence, the signal ${y}_{j}(f)$ received at the BS on sub-channel $f$ for the uplink strong user $j \in \mathcal{U}$ is
\begin{equation}
\label{eq:rxUpStrong}
\begin{aligned}
{y}_{j}(f) &= \beta_{j}(f) \sqrt{P_{j}(f)}{s}_{i}(f) +\zeta(f)\sum_{n \in \mathcal{D}(f)}\sqrt{{P}_{n}(f)}{s}_{n}(f) + \xi(f)
\end{aligned}
\end{equation}
where $\beta_{j}(f)$ is the  channel gain between transmitter $j \in \mathcal{U}$ and the BS, $\zeta(f)$ is the residual gain relative to the self interference (SI) term, representing the effect of non ideal cancellation at the BS,   $\mathcal{D}(f)$ is the set of downlink users allocated on subcarrier $f$, and $\xi(f)$  denotes the additive white Gaussian noise with distribution $\mathcal{CN}\left(0,\sigma^2\right)$.

The weak uplink user signal will also experience the interference caused by the other uplink user. Thus, the received signal for the uplink weak user $j' \in \mathcal{U}$ on subcarrier $f$ is
\begin{equation}
\label{eq:rxUpWeak}
\begin{aligned}
{y}_{j'}(f)  &= \beta_{j'}(f)\sqrt{P_{j'}(f)}{s}_{j'}(f) + \beta_j(f)\sqrt{P_{j}(f)}{s}_{j}(f) + \zeta(f)\sum\limits_{n \in \mathcal{D}(f)}\sqrt{{P}_{n}(f)}{s}_{n}(f) + \xi(f),
\end{aligned}
\end{equation}
where user $j\in \mathcal{U}$ is the strong uplink user on channel $f$.
 
\subsection{Downlink users}
After perfect cancellation of the signal of the downlink weak user,  the received signal for the downlink strong user $k \in \mathcal{D}$ is
\begin{equation}
\label{eq:rxDownStrong}
\begin{aligned}
{y}_{k}(f) &= \varepsilon_{k}(f) \sqrt{P_{k}(f)}{s}_{k}(f) +\sum_{n \in \mathcal{U}(f)}\eta_{n,k}(f) \sqrt{{P}_{n}(f)}{s}_{n}(f) + \xi_{k}(f).
\end{aligned}
\end{equation}
where $\varepsilon_{k}(f)$ is the channel gain between the BS and receiver $k$, $\mathcal{U}(f)$ is the set of the uplink users allocated on $f$, $\eta_{n,k}(f)$ is the  channel gain between  user $n \in \mathcal{U}(f)$ and user $k$,  and   $\xi_{k,f}$  is the additive white Gaussian noise with distribution $\mathcal{CN}\left(0,\sigma^2\right)$.
Finally, the signal at the weak downlink receiver $k' \in \mathcal{D}$ is
\begin{equation}
\begin{aligned}
{y}_{k'}(f) = &\varepsilon_{k'}(f) \sqrt{P_{k'}(f)}{s}_{k'}(f) +\varepsilon_{k}(f) \sqrt{P_{k}(f)}{s}_{k}(f) +\sum_{n \in \mathcal{U}(f)}\eta_{n,k}(f) \sqrt{{P}_{n}(f)}{s}_{n}(f)+ \xi_{k}(f)
\end{aligned}
\label{eq:rxDownWeak}
\end{equation}
where user  $k\in \mathcal{D}$ is the strong downlink user on channel $f$.

\subsection{A uniform notation for the uplink and downlink channels}
A close observation of \eqref{eq:rxUpStrong}-\eqref{eq:rxDownWeak} shows that the received signal  on subcarrier $f$ for  user $i$, regardless of the fact that $i$ is uplink, downlink, strong or weak user,  can be rewritten in a more compact form as
\begin{equation}
\begin{aligned}
{y}_{i}(f) &= h_{i,i}(f)\sqrt{P_{i}(f)}{s}_{i}(f) + \sum_{n \in \mathcal{I}_{i,l}(f)}h_{n,i}(f)\sqrt{P_{n}(f)}{s}_{n,f} + z_{i}(f),
\end{aligned}
\label{eq:fusion}
\end{equation}
where the set $\mc{I}_{i}(f)$ collects the users interfering with user $i$ on subcarrier $f$, defined as
\begin{equation} \label{eq:interferenceSet}
\mc{I}_{i}(f) =
\begin{cases}
\mc{D}(f) &\quad x_{i,s}(f) = 1, \, i \in \mc{U}, \\
\mc{U}(f) &\quad x_{i,s}(f) = 1, \, i \in \mc{D}, \\
\mc{A}(f)\setminus i &\quad x_{i,w}(f) = 1,\, \forall i.
\end{cases}
\end{equation} 
The correspondence between the coefficients $h_{n,i}(f)$ and the various propagation gains can be inferred by confronting \eqref{eq:rxUpStrong}-\eqref{eq:rxDownWeak} with \eqref{eq:fusion}. In general, if $n=i$ the  coefficient represents the direct channel between an user and the BS while if $n \neq i$ the gain coefficient is associated to an interference term. For example, when $i$ is the weak uplink user in channel $f$ described in \eqref{eq:rxUpWeak}, $h_{i,i}(f)=\beta_{i}(f)$ and $h_{n,i}(f)$ is either $\beta_{n}(f)$ if $n$ is the strong uplink user or $\zeta(f)$ if $n\in\mathcal{D}(f)$. The term $z_{i}(f)$ represents the noise in the uplink or downlink receiver  with variance $\sigma^{2}=\E\left\{|z_{i,f}|^2\right\}$.
Fig.~\ref{fig:scenario} depicts a toy example of four NOMA-FD users transmitting on the same channel. 
\begin{figure}[tbh]
\centering
\includegraphics[width= 8cm]{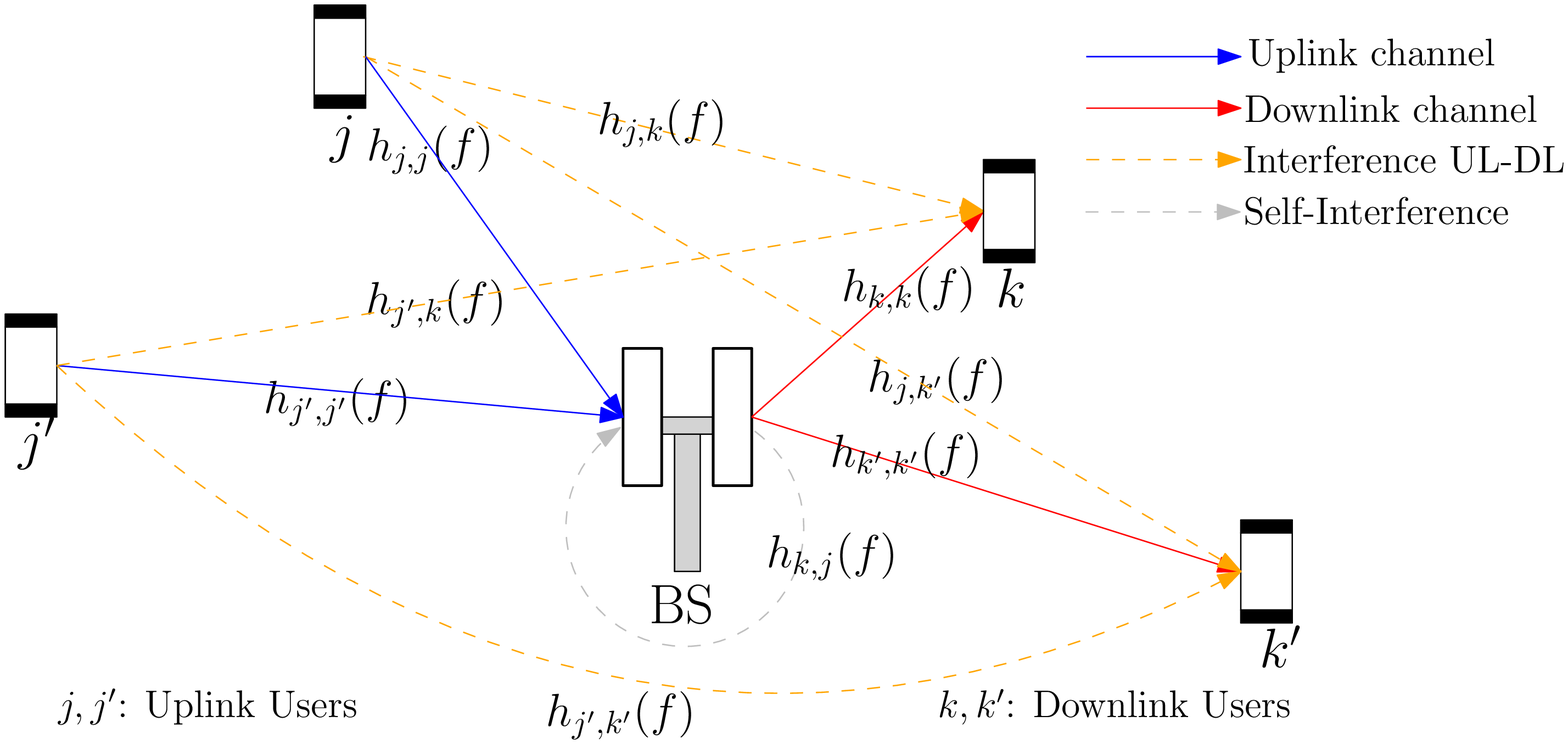}
\caption{Toy example of four users transmitting on the same channel.}
\label{fig:scenario}
\end{figure}

According with the formulation provided in~\eqref{eq:fusion}, the achievable normalized rate  of a user $i$ allocated on subcarrier $f$ 
is
\begin{equation}
R_{i}(f)= \log_{2}\left(1 + \gamma_{i}(f) \right)
\label{eq:Rate}
\end{equation}
with the signal-to-interference-plus-noise ratio (SINR) is 
\begin{equation}
\gamma_{i}(f)=\frac{\left|h_{i,i}(f)\right|^2 P_{i}(f)}
{\displaystyle\sum_{n \in  \mathcal{I}_{i}(f)} \left|h_{n,i}(f)\right|^2 P_{n}(f) + \sigma^2 }.
\end{equation}

\subsection{Interference cancellation}
\label{sec:intCanc}
Interference cancellation is a key task in a NOMA-FD system and in our context requires special scrutiny. In the \emph{uplink} direction, interference cancellation is simple:  all signals are superimposed on the same channel and the BS decides the decoding order without any particular constraint, so that in our setting we can decide to estimate first and cancel the uplink weak user in order to  receive the strong user without any uplink interference, as in \eqref{eq:rxUpStrong}.
In the \emph{downlink}, the  data stream intended for the weak user's receiver needs to be canceled at the strong user's receiver. Accordingly, it is necessary that the weak users's signal has a higher SINR at the strong user receiver than at its own receiver. Unfortunately, the signal experiences different channel attenuation and interference at the two receivers. Thus, to illustrate this new constraint, let us focus on the downlink channel $f$, where  $k\in\mc{D}$ is the strong user and $k'\in\mc{D}$ is the weak user. User $k'$ transmits with rate $R_{k'}(f)$ as defined in \eqref{eq:Rate}. The achievable rate of user $k'$ at the receiver of $k$  is $R_{k',k}(f)=\log_{2}\left(1 + \gamma_{k',k}(f) \right)$, where it is
\begin{equation}
\label{eq:rateWS}
\gamma_{k',k}(f)=\frac{\left|h_{k,k}(f)\right|^2 P_{k'}(f)}{\left|h_{k,k}(f)\right|^2 P_{k}(f) + \hspace{-3pt} \sum\limits_{n \in  \mathcal{U}(f)} \left|h_{n,k}(f)\right|^2 P_{n}(f) + \sigma^2 }.
\end{equation}
Accordingly, the condition for perfect cancellation of the weak user $k'$ is
\begin{equation}
\label{eq:WSWW}
R_{k',k}(f) \geq R_{k'}(f), \quad  
k,k' \in \mc{D} \,\Big|\, x_{k,s}(f) = x_{k',w}(f) = 1,
\end{equation}
The  inequality~\eqref{eq:WSWW} is non-linear and non-convex in the users' power but can be converted into a linear one. By exploiting the monotone property of the $\log$ function, one obtains the equivalent form of \eqref{eq:WSWW} as $\gamma_{k',k}(f) \geq \gamma_{k'}(f),$
%
%
which, provided that $P_{k'}(f) > 0$, is equivalent to 
\begin{equation}
\label{eq:GAMMA}
\begin{aligned}
\Gamma_{k',k}(f) =& \sum_{j \in \mc{U}(f)} \theta_{j,f}^{(k',k)} P_{j}(f) +\delta_f^{(k',k)} \geq 0.
\end{aligned}
\end{equation}
where
\begin{equation}
\label{eq:thetaDef}
\begin{aligned}
\theta_{j,f}^{(k',k)} &= |h_{k,k}(f) \, h_{j,k'}(f)|^2 - |h_{k',k'}(f) \, h_{j,k}(f)|^2, \quad
\delta_f^{(k',k)} &= \sigma^2 \left(|h_{k,k}(f)|^2 - |h_{k',k'}(f)|^2 \right).
\end{aligned}
\end{equation}
%
An interesting feature of \eqref{eq:GAMMA} is that it depends only on the channel gains of users $k$ and $k'$ and not on their power. Nevertheless, it is worth noting that indeed there is a dependence on the users' power: the relation~\eqref{eq:GAMMA} is binding only if $P_{k}(f)$ and $P_{k'}(f)$ are both different from zero. Otherwise, even if only one of the two allocated users  is transmitted with zero power, the NOMA paradigm is not implemented on subcarrier $f$ for downlink transmission, and the constraint does not hold anymore.

\section{Problem Formulation}
\label{sec:PF}
\MM{Let us denote by $\mb{x}_s = \{x_{i,s}(f),\, i\in\mc{A}, f\in\mc{F}\}$ and by $\mb{x}_w = \{x_{i,w}(f),\, i\in\mc{A}, f\in\mc{F}\}$ the vectors collecting the strong and weak allocation variables, respectively, and by $\mb{x}$ the vector stacking $ \mb{x}_s$ and $\mb{x}_w$.
Moreover, let us denote by $\mb{P}(f) = \set{P_{i}(f),\, i \in \mc{A}}$ the vector collecting all the powers for all users on subcarrier $f$ and by $\mb{P} = \set{\mb{P}(f),\,\forall f \in \mc{F}}$ the vector of all powers for all the users and subcarriers.}
\MM{Given the  the weights $\alpha_{i}$, $i \in \mc{A}$, employed to enforce a certain degree of fairness among users, the weighted sum rate on subcarrier $f$ can be computed for  the strong and weak couple of users as
\begin{equation}
\label{eq:utilitySW}
\begin{aligned}
U_s(\mb{x}_s, \mb{P}(f)) &= \sum_{i \in \mc{A}(f)}x_{i,s}(f) \alpha_i  R_i(f), \qquad
U_w(\mb{x}_w, \mb{P}(f)) &= \sum_{i \in \mc{A}(f)}  x_{i,w}(f) \alpha_iR_i(f).
\end{aligned}
\end{equation}
We are now able to define the utility function $U(\mb{x}, \mb{P})$ as the weighted sum rate of all users on all sucarriers, i.e.
\begin{equation}
\label{eq:utility}
\begin{aligned}
U(\mb{x}, \mb{P}) = \sum_{f \in \mc{F}} U_s(\mb{x}_s, \mb{P}(f)) + U_w(\mb{x}_w, \mb{P}(f))
\end{aligned}
\end{equation}}

The maximum weighted sum rate allocation problem can be formulated as 
\begin{align}  \label{PB:RA}
&\underset{\substack{ \mathbf{P} \succeq 0, \mb{x}} }{\max}
          ~ U\left(\mb{x}, \mb{P}\right) \\
    & \quad \text{subject to} \notag \\ 
        &  \qquad \sum_{f \in \mc{F} } \AAB{\big(x_{i,s}(f) + x_{i,w}(f) \big)} P_{i}(f) \le P_{U}, \quad  \forall i \in \mathcal{U} \tag{\ref{PB:RA}.a}\label{PB:PU} \nonumber\\
        & \qquad  \sum_{f \in \mc{F} } \sum_{i \in \mathcal{D}} \AAB{\big(x_{i,s}(f) + x_{i,w}(f)\big)}  P_{i}(f) \le P_{D} \tag{\ref{PB:RA}.b}\label{PB:PD}\\
        &  \qquad \sum_{i \in \mathcal{U}} x_{i,s}(f) \le 1, \quad \forall f \tag{\ref{PB:RA}.c}\label{PB:KU1} \nonumber\\
        &  \qquad \sum\limits_{i \in \mathcal{U}} x_{i,w}(f) \le 1, \quad \forall f \tag{\ref{PB:RA}.d}\label{PB:KU2} \nonumber\\
        &  \qquad \sum\limits_{i \in \mathcal{D}} x_{i,s}(f) \le 1, \quad \forall f \tag{\ref{PB:RA}.e}\label{PB:KD1}  \nonumber\\
        &  \qquad \sum\limits_{i \in \mathcal{D}} x_{i,w}(f) \le 1, \quad \forall f \tag{\ref{PB:RA}.f}\label{PB:KD2}  \nonumber\\
        &  \qquad x_{i,s}(f)+x_{i,w}(f) \le 1, \quad  \forall i \in \{\mathcal{U},\mc{D}\}, \quad\forall f \tag{\ref{PB:RA}.g}\label{PB:KU11} \nonumber\\
        &  \qquad \Gamma_{k',k}(f) \geq 0,\, \begin{aligned} \forall k',k &\in \mc{D} \\ \forall f &\in \mc{F} \end{aligned} \,\Big|\, \begin{aligned} &x_{k',w}(f) = x_{k,s}(f) = 1,\\
        &P_{k'}(f)>0, P_{k}(f)>0. \end{aligned}  \tag{\ref{PB:RA}.i}\label{PB:WSWW}
\end{align}
Constraints~\eqref{PB:PU} and~\eqref{PB:PD} represent the power budget for the uplink and downlink users.
Constraints~\eqref{PB:KU1}-\eqref{PB:KU11} ensure that a maximum of only one couple of strong and weak users for both uplink and downlink directions can be allocated for each channel. Finally, constraints~\eqref{PB:WSWW} guarantee  successful cancellation of the downlink weak  user.  The constraints formally explicit that both powers of the  allocated downlink  users must be greater than zero to be binding. \MM{Going along with other papers dealing with similar systems \cite{Sun2017,Xiao17}, we have opted for weighting the rate of each user by user user-specific coefficients, rather then setting minimum data rate requirements. This choice is in line with the handling of data traffic typical of NOMA-FD systems in 5G.}

The weighted sum rate problem~\eqref{PB:RA} is not convex because of the structure of  the utility function in $\mb{P}$ and $\mb{x}$ and because of the binary integer nature of the allocation variable $\mb{x}$. Moreover, since the size of both $\mb{P}$ and $\mb{x}$ is $F M^2 N^2$, solving~\eqref{PB:RA} is further made complex by the large number of variables and constraints involved in the optimization process.
\MM{Table \ref{tab:notation} summarizes  most of the notation  employed in this paper.}
{\begin{table}[thb]
\smaller
\centering
\begin{tabular}{p{0.18\textwidth}|p{0.72\textwidth}}
\toprule
$\mc{U}, \mc{D}, \mc{A}$, $\mc{F}$ & set of $M$ uplink, $N$ downlink, all users, and $F$ sucarriers \\
$\mc{U}(f), \mc{D}(f), \mc{A}(f)$ & set of uplink, downlink, all users allocated on $f \in \mc{F}$ \\
\midrule
$\mb{x}$ & vector collecting all the allocation variables \\
$\mb{x}_{s(w)}$ & vector collection strong (weak) allocation variables\\
$\mc{X}_{s,(w)}^{(\ell)} $ & feasible set for strong (weak) allocation variables on iteration $\ell$ \\
\midrule
$\mb{P}$ & vector of all power coefficients \\
$\mb{P}(f)$ & vector of power coefficient of $f \in \mc{F}$ \\
$\mb{P}_{s,(w)}^{(\ell)}$ & vector of power coefficient for strong (weak) users allocated on iteration $(\ell)$  \\
$\mb{P}_{U(D)}$ & vector of power coefficient for uplink (downlink) users \\
$\mc{P}_{s,(w)}^{(\ell)} $ & feasible set for strong (weak) power coefficients on iteration $\ell$ \\
$P_{U(D)}$ & maximum power available for uplink (downlink) users \\
\midrule
$h_{n,i}(f)$ & channel gain coefficient between $n \in \mc{A}$ to $i \in\mc{A}$\\
$\mc{I}_i(f)$ & set of users interfering with $i \in \mc{A}$ defined in~\eqref{eq:interferenceSet} \\
$\mc{C}_i(f)$ & set of users which are interfered by $i \in \mc{A}$ defined in~\eqref{eq:intSetC} \\
$\gamma_i(f)$, $R_i(f)$ & SINR and achievable rate of $i \in \mc{A}$ on $f\in\mc{F}$ \\
$\gamma_{k',k}(f), R_{k',k}(f)$ & SINR and achievable rate of weak $k' \in \mc{D}$ evaluated at strong user $k \in \mc{D}$ for $f\in\mc{F}$ \\
$\Gamma_{k',k}(f)$ & linear NOMA constraint~\eqref{eq:GAMMA} to be fulfilled when weak and strong users $k',k\in\mc{D}$ are allocated on $f\in\mc{F}$\\
\midrule
$U_{s(w)}(\mb{x}_{s(w)}, \mb{P}(f))$ &  weighted sum of the strong (weak) users on $f\in\mc{F}$~\eqref{eq:utilitySW} \\
$U(\mb{x},\mb{P})$ & overall weighted sum rate~\eqref{eq:utility} \\
\midrule
$g_{s(w)}$ & Lagrangian dual function of the strong (weak) allocation process \\
$L_{j,k,s(w)}(\mb{P}(f), \bm{\mu})$ & Lagrangian auxiliary function accounting for the strong (weak) allocation $j\in\mc{U},k\in\mc{D}$ on $f\in\mc{F}$\\
$L_\text{cav}, L_\text{vex}$ & concave and convex part of the Lagrangian auxiliary function; defined in~\eqref{eq:dc} \\
\bottomrule
\end{tabular}
\caption{Notation used throughout the paper.}
\label{tab:notation}
\end{table}}

\MM{\section{A globally convergent algorithm for \eqref{PB:RA}} 
\label{sec:propscheme}
Recent research~\cite{Xu2017} has showed that \emph{block coordinate descent} (BCD), a class of iterative  algorithms traditionally employed to solve convex problems,  can be used to address large-scale nonconvex optimization problems. The idea behind coordinate descent algorithms is to find a solution  by successively solving the problem along a subset of optimization variables, keeping all the other variables fixed. Under certain conditions, the block coordinate approach yields optimal or quasi-optimal results with a fraction of the original complexity.  When the problem is noncovex, the goal is global convergence to a local optimum, hopefully very close to the optimal solution.}

\FS{Owing to the particular structure of the system considered, we propose a BCD algorithm that alternates iteratively between the solution of the allocation problem for the strong and for the weak users.}
\MM{The update rule at iteration $\ell$ is
\begin{equation}
\label{eq:updateRule}
\left\{ 
\begin{aligned}
\mb{x}_{s}^{(\ell)},\mb{P}_{s}^{(\ell)} =\argmax\limits_{\substack{\mb{x}\in\mc{X}_{s}^{(\ell)}, \mb{P}\succeq 0}} 
\Big\{& U\left(\mb{x},\mb{P};\mb{x}_{w}^{(\ell-1)},\mb{P}_{w}^{(\ell-1)}\right) 
- K^{(\ell)} || \mb{P} - \mb{P}_s^{(\ell-1)}||^2 \Big\}, \\
\mb{x}_{w}^{(\ell)},\mb{P}_{w}^{(\ell)} =\argmax\limits_{\substack{\mb{x}\in\mc{X}_{w}^{(\ell)}, \mb{P}\succeq 0}} 
\Big\{& U \left(\mb{x}_{s}^{(\ell)},\mb{P}_{s}^{(\ell)};\mb{x},\mb{P}\right)
- K^{(\ell)}|| \mb{P} - \mb{P}_w^{(\ell-1)}||^2\Big\},\\
\text{subject to } \eqref{PB:PU},~\eqref{PB:PD},&~\eqref{PB:WSWW},
\end{aligned} 
\right.
\end{equation}
where we have explicitly expressed $U$ as a function of the variables currently under maximization and the variables obtained by the previous step.}

\MM{Each iteration of the BCD algorithm is composed by two steps: in the first step the objective function is maximized  with respect to the strong users' block of variables, having fixed the weak users variables; in the second step the utility is maximized with respect to the weak users, having fixed the strong users' variables. The algorithm is initialized by setting all the weak allocation variables to zero in the first step of iteration $\ell=0$, i.e., at $\ell= 0$ all users are potentially strong users. Moreover, the initial power coefficients  are all set to $0$.}

\MM{In \eqref{eq:updateRule} the allocation variable constraints of problem~\eqref{PB:RA} are collected in the following feasible sets
\begin{equation}
\begin{aligned}
\label{eq:setX}
\mc{X}_{s} \hspace{-2.5pt}&= \hspace{-2.5pt}\set{x_{i,s}(f)\in\{0,1\},\, i \in\{\mathcal{U},\mc{D}\} \tc x_{i,w}(f)=0}
\cap \Big\{ \sum_{i \in \mc{U}} x_{i,s}(f)\le 1, \, \sum_{i \in \mc{D}} x_{i,s}(f)\le 1, \, \forall f\Big\},
\\
\mc{X}_w \hspace{-2.5pt}&= \hspace{-2.5pt}\set{x_{i,w}(f)\in\{0,1\},\, i \in\{\mathcal{U},\mc{D}\}\tc x_{i,s}(f)=0} 
\cap \Big\{ \sum_{i \in \mc{U}} x_{i,w}(f)\le 1, \, \sum_{i \in \mc{D}} x_{i,w}(f)\le 1, \, \forall f\Big\}.
\end{aligned}
\end{equation}
It is worth noting that the composition of the sets $\mc{X}_s$ and $\mc{X}_w$ may change from iteration to iteration, because the condition of   being   the strong or weak user on a given channel greatly depends on the presence of the interference.}

\MM{As illustrated in Sections~\ref{sec:SA} and~\ref{sec:WA}, the allocation problem for the two steps in~\eqref{eq:updateRule} can be solved to a local optimum, so that, in theory, the  solution of a new step of an iteration  can only improve the sum rate with respect to the previous one.}

\MM{To enforce the convergence of the algorithm, the objective function in~\eqref{eq:updateRule}  has  been slightly  modified with respect to~\eqref{PB:RA}  by adding the regularization term $K^{(\ell)} || \mb{P} - \mb{P}_m^{(\ell -1)}||^2$, with $m \in \{w,s\}$. If the problem were concave, there would be only one maximum and the value of objective function $U$ would grow for each step of the algorithm until convergence.} 
\FS{However, a non concave function may have a larger number of local optima and an optimization procedure without regularization might end up in a different local optimum, which could have a lower value than the starting point. Nevertheless, as shown in~\cite{Xu2017}, provided that the search is performed in a sufficiently small interval, the function can be approximated locally as concave.}
\MM{Let $U\left(\mb{x}^{(\ell-1)},\mb{P}^{(\ell-1)}\right)$ be  the value of the utility at the end of iteration $\ell-1$  of \eqref{eq:updateRule}   and  $\mb{x}_{s}^{(\ell)},\mb{P}_{s}^{(\ell)}$ be the solution of the first step of iteration $\ell$. Given the  interference determined by the allocations $\mb{x}_{w}^{(\ell-1)},\mb{P}_{w}^{(\ell-1)}$,  the solver maximizes the utility function $U$ and, if $\mb{P}_{s}^{(\ell)}$  lies in the proximity of $\mb{P}_{s}^{(\ell-1)}$, it is}
\begin{equation}
\label{eq: algImpr}
U\left(\mb{x}_{s}^{(\ell)},\mb{P}_{s}^{(\ell)},\mb{x}_{w}^{(\ell-1)},\mb{P}_{w}^{(\ell-1)}\right)\ge U\left(\mb{x}^{(\ell-1)},\mb{P}^{(\ell-1)}\right).
\end{equation}
%
\MM{By adding the regularization term we introduce a penalty for any solution that is far away from $\mb{P}_{s}^{(\ell-1)}$, so that $\mb{P}_{s}^{(\ell)}$ is either forced to be in the proximity of $\mb{P}_{s}^{(\ell-1)}$ or to exhibit  a gain that overcomes the  penalty of the regularization factor $K^{(\ell)} || \mb{P}_{s}^{(\ell)} - \mb{P}_s^{(\ell -1)}||^2$.  In both cases, the utility function will improve with respect to its previous value as in \eqref{eq: algImpr}. The same reasoning applies also to the second step of \eqref{eq:updateRule}. The value of $K^{(\ell)}$ is chosen following the method in ~\cite{Xu2017}.}

\begin{theorem}
\label{theo:convBCD}
The BCD algorithm converges to a local optimum of \eqref{PB:RA}.
\end{theorem}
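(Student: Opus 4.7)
The plan is to prove convergence in three steps: (i) establish monotonicity of the utility sequence $\{U(\mathbf{x}^{(\ell)},\mathbf{P}^{(\ell)})\}$ generated by~\eqref{eq:updateRule}; (ii) establish that this sequence is bounded above; (iii) invoke the proximal block coordinate descent results of \cite{Xu2017} to conclude that every limit point is a block-coordinate-wise stationary point of~\eqref{PB:RA}, which is the appropriate notion of local optimum for this mixed binary–continuous problem.

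Step (i) is essentially already recorded in~\eqref{eq: algImpr}: at the beginning of each half-iteration the previous iterate $(\mathbf{x}^{(\ell-1)},\mathbf{P}^{(\ell-1)})$ is itself feasible for the regularized subproblem and incurs a zero penalty, so any maximizer of the regularized objective attains at least the previous utility value. Subtracting the non-negative penalty term on the left then shows that the un-regularized $U$ is non-decreasing across both the strong-user and weak-user updates. The quadratic regularizer is what makes this argument valid despite $U$ being non-concave in $\mathbf{P}$: without it, a sub-step maximizer could jump to a distant local optimum whose $U$-value is smaller than the incumbent.

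Step (ii) follows from the power budgets \eqref{PB:PU}--\eqref{PB:PD}: each $R_i(f)$ is continuous in $\mathbf{P}$ on the compact feasible set and bounded by the trivial $\log_2(1 + P_{U}|h_{\max}|^2/\sigma^2)$-type quantity, and since $|\mathcal{F}|$ and $|\mathcal{A}|$ are finite the weighted sum rate is uniformly bounded. Combined with step (i), the monotone convergence theorem yields convergence of $U(\mathbf{x}^{(\ell)},\mathbf{P}^{(\ell)})$.

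Step (iii) applies the framework of \cite{Xu2017}, which guarantees that a proximal BCD with a quadratic prox-term and a properly chosen $K^{(\ell)}$ produces iterates whose cluster points are block-wise stationary; in our context this means no strong-user or weak-user update can strictly improve $U$, which is the natural local-optimum notion for~\eqref{PB:RA}. The main obstacle I expect is reconciling the integrality of $\mathbf{x}$ with the continuous-variable hypotheses of \cite{Xu2017}. I would handle this by noting that only finitely many allocation patterns $\mathbf{x}$ are feasible under~\eqref{PB:KU1}--\eqref{PB:KU11}, so monotonicity together with strict-inequality tie-breaking forces the iteration to stabilize on a single pattern after finitely many iterations; on that pattern the update reduces to a purely continuous proximal BCD in $\mathbf{P}$, to which \cite{Xu2017} applies verbatim. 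A secondary technical point is that the NOMA cancellation constraint~\eqref{PB:WSWW} is active only when both allocated downlink powers are strictly positive; this can be addressed by enumerating the finitely many activity patterns of~\eqref{PB:WSWW} as closed sub-problems and applying the above argument to each.
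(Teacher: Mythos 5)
Your proof follows essentially the same route as the paper's: monotonicity of $U$ obtained because the previous iterate is feasible for each regularized subproblem with zero proximal penalty, plus boundedness of the weighted sum rate over the compact power-budget set, giving convergence of the utility sequence. Your step (iii) --- invoking \cite{Xu2017} together with the finiteness of the feasible allocation patterns to argue that the limit is genuinely a blockwise stationary point of \eqref{PB:RA} --- is actually more careful than the paper's own argument, which simply asserts that the bounded monotone sequence ``must converge on a local optimum'' and observes that the regularization term vanishes at convergence.
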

\begin{proof}
\MM{We have just shown that at each step of \eqref{eq:updateRule} the objective function either increases or  does not change. The convergence of the BCD algorithm is guaranteed by the fact that the weighted sum rate is bounded and thus it can not increase indefinitely but at certain point must converge on a local optimum. At convergence,  the regolarization term $K^{(\ell)} || \mb{P} - \mb{P}_m^{(\ell -1)}||^2$ will be null and the objective function of  \eqref{eq:updateRule} will coincide with the objective function of \eqref{PB:RA}.}
\end{proof}
\FS{In the  following two subsections we will address both the strong and weak allocation problems of~\eqref{eq:updateRule}.} 

\subsection{Strong users Allocation (SA)}
\label{sec:SA}
\MM{Let us focus on the allocation problem for the strong users at iteration $\ell$.}
Taking into account the power allocated in the weak allocation step at iteration $\ell-1$, the remaining power budget for uplink users and BS is 
\begin{equation}
\label{eq:remPowS}
\begin{aligned}
P_{U,i,s}^{(\ell)} &= P_{U} - \sum\limits_{f\in\mc{F}} x_{i,w}^{(\ell-1)}(f) P_{i}^{(\ell -1)}(f), \qquad
P_{D,s}^{(\ell)} &= P_{D} - \sum\limits_{f \in \mc{F}}\sum\limits_{i \in \mathcal{D}} x_{i,w}^{(\ell-1)}(f) {P}_{i}^{(\ell-1)}(f),
\end{aligned}
\end{equation}
and the strong user problem of \eqref{eq:updateRule} is formulated as 
\begin{align}
\label{PB:SA}
&\max\limits_{\substack{\mb{x}\in\mc{X}_{s}^{(\ell)}\\ \mb{P}\succeq 0}} U\left(\mb{x},\mb{P};\mb{x}_{w}^{(\ell-1)},\mb{P}_{w}^{(\ell-1)}\right) - K^{(\ell)} || \mb{P} - \mb{P}_s^{(\ell -1)}||^2 \\
& \quad \text{subject to} \notag \\
& \qquad \sum\limits_{f \in \mc{F} } x_{i,s}(f) P_{i}(f) \le P_{U,i,s}^{(\ell)}, \quad  \forall i \in \mathcal{U} \tag{\ref{PB:SA}.a}\label{PB:PUstrong} \nonumber\\
& \qquad  \sum\limits_{f \in\mc{F}} \sum\limits_{i \in \mathcal{D}} x_{i,s}(f){P}_{i}(f) \le P_{D,s}^{(\ell)}, \tag{\ref{PB:SA}.b}\label{PB:PDstrong} \\
&  \qquad \Gamma_{k',k}(f) \geq 0,\, \begin{aligned} \forall k',k &\in \mc{D} \\ \forall f &\in \mc{F} \end{aligned} \,\Big|\, \begin{aligned} &x_{k',w}(f) = x_{k,s}(f) = 1,\\
        &P_{k'}^{(\ell-1)}(f)>0. \end{aligned}  \tag{\ref{PB:SA}.c}\label{PB:WSWWstrong}
\end{align}
%

Even if not convex, problem~\eqref{PB:SA} can be efficiently solved in the \emph{Lagrangian dual domain}. These types of problems present several local optima, due to the non-convexity of the objective function. Nevertheless, it has been shown in~\cite{Abrardo2017} that by choosing a proper set of initial values, even if strong duality does not hold, a local solution very close the global optimum of the primal problem~\eqref{PB:SA} can be found by solving the  Lagrangian dual problem
\begin{equation} 
\label{eq:LagDPStrong}
\min_{\bm{\mu}\succeq 0} g_s(\bm{\mu}),
\end{equation}
where $\bm{\mu} = [\mu_0, \mu_1, \dots \mu_M]^T \in \mathbb{R}_{+}^{(U+1)}$ is the vector of the Lagrangian dual variables  associated to the constraints~\eqref{PB:PDstrong} for $\mu_0$, and~\eqref{PB:PUstrong} for $\mu_i$, $i \in 1, \dots M$.
The Lagrangian dual function $g_s(\bm{\mu})$ is
\begin{equation} 
\label{eq:LagStrong1}
\begin{aligned}
&g_s(\bm{\mu})  = \hspace*{-2pt}  \max_{\substack{\mb{x} \in \mc{X}_s^{(\ell)}\\\mathbf{P} \in \mc{P}_{s}^{(\ell)}}} \hspace*{-3pt} U\Big(\mb{x},\mb{P};\mb{x}_{w}^{(\ell-1)},\mb{P}_{w}^{(\ell-1)}\Big) \hspace*{-3pt}-\hspace*{-3pt} K^{(\ell)} || \mb{P} - \mb{P}_s^{(\ell -1)}||^2  \\ 
&+ \sum_{j \in \mc{U}} \mu_j  \big( P_{U,i,s}^{(\ell)} - \hspace{-2pt}\sum_{f\in\mc{F}} P_{j}(f) \big) + \mu_{0} \big(P_{D,s}^{(\ell)} - \sum_{f\in\mc{F}} \sum_{k \in \mc{D}} P_{k}(f) \big),
\end{aligned}
\end{equation}
where
\begin{equation} \label{eq:feasibleStrong}
\begin{aligned}
\mc{P}_{s}^{(\ell)} = \Big\{&\mb{P} \tc \mb{P} \succeq 0,\, P_{k'}^{(\ell-1)}(f) \Gamma_{k',k}(f) \geq 0, \,\forall k',k \in \mc{D} \tc x_{k',w}(f) = x_{k,s}(f) = 1,\,  \forall f \Big\}.
\end{aligned}
\end{equation}
Now, we define the Lagrangian auxiliary function which accounts for the contribution of the strong uplink-downlink allocation ($j,k$) on subcarrier $f$ as
\begin{equation}
\label{eq:auxLagStrong}
\begin{aligned}
L_{j,k,s}&(\mb{P}(f),\bm{\mu}) = \alpha_j R_j(f) + \alpha_k R_k(f) + U_w(\mb{x}_w^{(\ell -1)}, \mb{P}(f))\\
 &- K_\ell ||[P_j(f), P_k(f)] - [P_j^{(\ell-1)} (f),P_k^{(\ell-1)}(f)]||^2 
- \mu_j P_j(f) - \mu_0 P_{k}(f).
\end{aligned}
\end{equation}
Hence, neglecting  the terms not involved in the optimization, the dual Lagrangian function~\eqref{eq:LagStrong1} is equivalent to the following
\begin{equation}
\label{eq:lagComp}
\begin{aligned}
g_s(\bm{\mu}) = &\sum_{f\in\mc{F}}\hspace{-3pt}  \max_{\substack{\mb{x} \in \mc{X}_s^{(\ell)} \\ \mathbf{P}(f) \in \mc{P}_s^{(\ell)}}} \sum_{j \in \mc{U}} \sum_{k \in \mc{D}} x_{j,s}(f) x_{k,s}(f) L_{j,k,s}(\mb{P}(f),\bm{\mu}).\\
\end{aligned}
\end{equation}
It is now possible to  solve the maximization problem in \eqref{eq:lagComp} by computing first  the power coefficients and then the allocation variables.

\FS{The power coefficients can be evaluated by maximizing the auxiliary Lagrangian function~\eqref{eq:auxLagStrong}. Keeping fixed a single strong couple $j \in \mc{U}$, $k \in \mc{D}$, and the previously selected weak users by means of $\mb{x}_w^{(\ell-1)}$ and $\mb{P}^{(\ell-1)}_w$, the maximization process on $\mb{P}(f)$ involves $P_j(f)$ and $P_k(f)$ only. Hence, the power coefficient can be obtain solving
\begin{equation} \label{eq:maxLstrong}
\begin{aligned}
\left[P^{*}_{j}(f),P^{*}_{k}(f)\right]=\argmax_{\mb{P}(f) \in \mc{P}_s^{(\ell)}} & \, L_{j,k,s}(\mb{P}(f),\bm{\mu}), \quad 
\begin{aligned}
j &\in \mc{U}, \\
k &\in \mc{D},
\end{aligned}
\end{aligned}
\end{equation}
for \emph{each possible couple of strong users} $j \in \mc{U}$ and $k \in \mc{D}$.}
Problem~\eqref{eq:maxLstrong} is a \emph{difference-of-convex} (\emph{DC}) function maximized on a convex set, thus it can be solved by state-of-the-art methods as described in Appendix~\ref{sec:powerAllocation}.

Once the  power coefficients are computed for every possible couple of users, the optimal allocation variables for each subcarrier are obtained by selecting the couple which maximizes the Lagrangian function, i.e.,
\begin{equation} \label{eq:alloVarStrong}
x_{j(f),s}(f) = 1, x_{k(f),s}(f) = 1, \quad \forall f \in\mc{F},
\end{equation}
where $j(f)\in \mc{U}$ and $k(f) \in \mc{D}$ and it is 
\begin{equation}
[j(f), k(f)]= \arg\max_{j,k} L_{j,k,s}(\mb{P}^*(f),\bm{\mu}), \,\, f \in \mc{F}.
\end{equation}
\FS{All the allocation variable and the power coefficient of users not allocated are set to 0, i.e.
\begin{equation}
\begin{aligned}
x_{j,s}(f) = 0, \quad& P_j(f) = 0,  \\
x_{k,s}(f) = 0, \quad& P_k(f) = 0,
\end{aligned} \quad
\begin{aligned}
\forall j \neq j(f) \in \mc{U}, \\
\forall k \neq k(f) \in \mc{D}.
\end{aligned}
\quad \forall f \in \mc{F}
\end{equation}}
Having found the power coefficients and the allocation variables for a given value of $\bm{\mu}$, the algorithm computes the subgradient of~\eqref{eq:LagStrong1} as described in Appendix~\ref{sec:dual}, and it updates the Lagrangian dual variables with the \emph{cutting plane ellipsoid} method~\cite{Yu2006}, iterating until convergence.

The Strong user Allocation (SA) algorithm is summarized in Algorithm~\ref{alg:SA}. The main complexity burden is represented by computing  $F M N$ metrics at each iteration on $\bm{\mu}^{(t)}$.
At  convergence, the algorithm yields to local optimum variables, collected in vectors $\mb{x}_s^{(\ell)}$ and $\mb{P}_{s}^{(\ell)}$.

\begin{algorithm}
\footnotesize
\caption{Strong user Allocation (SA)}
\label{alg:SA}
\textbf{Input:}
$\mathbf{P} \leftarrow [\mb{P}_s^{(\ell-1)}, \mb{P}_w^{(\ell-1)}] $, $t \leftarrow 1$, $\Delta \leftarrow 1$: \textbf{Output:} $\mb{P}_s^{(\ell)}$, $\mb{x}_s^{(\ell)}$\;
\While{$\Delta > 0$}
{
	\For {$f \in \mc{F}$}
	{
		\For {$(j,k) \in (\mathcal{U},\mathcal{D})$}
		{			
			Compute $P_{j}(f)$ and $P_{k}(f)$ by solving~\eqref{eq:maxLstrong};
		}
		Compute $x_{i,s}(f)$ as in~\eqref{eq:alloVarStrong};\\		
	}
	Update $\bm{\mu}^{(t)}$ as described in Appedix~\ref{sec:dual}\;
	$\Delta \leftarrow \left\Vert \bm{\mu}^{(t)} - \bm{\mu}^{(t-1)}\right\Vert$\;
	$t \leftarrow t + 1$\;
}
\end{algorithm}
\normalsize

\vspace{-0.5cm}
\subsection{Weak users Allocation (WA)} 
\label{sec:WA}

\MM{After the strong users are allocated as described above, power coefficients $\mb{P}_s^{(\ell)}$ and allocation variables $\mb{x}_s^{(\ell)}$ are obtained. In analogy with \eqref{eq:remPowS},  the remaining power budget for weak uplink users and BS at iteration $\ell$ is}
\begin{equation}
\begin{aligned}
P_{U,i,w}^{(\ell)} &= P_{U} - \sum\limits_{f\in\mc{F}} x_{i,s}^{(\ell)}(f) P_{i}^{(\ell)}(f), \qquad
P_{D,w}^{(\ell)} &= P_{D} - \sum\limits_{f \in \mc{F}}\sum\limits_{i \in \mathcal{D}} x_{i,s}^{(\ell)}(f) {P}_{i}^{(\ell)}(f),
\end{aligned}
\end{equation}
and the weak users' allocation problem at iteration $\ell$ is
\begin{align}
\label{PB:WA}
&\max\limits_{\substack{\mb{x}\in\mc{X}_{w}^{(\ell)}\\ \mb{P}\succeq 0}} U\left(\mb{x}_s^{(\ell)},\mb{P}_s^{(\ell)};\mb{x},\mb{P}\right) - K^{(\ell)} || \mb{P} - \mb{P}_w^{(\ell -1)}||^2 \\
& \quad \text{subject to} \notag \\
& \qquad \sum\limits_{f \in \mc{F} } x_{i,s}(f) P_{i}(f) \le P_{U,i,w}^{(\ell)}, \quad  \forall i \in \mathcal{U} \tag{\ref{PB:WA}.a}\label{PB:PUweak} \nonumber\\
& \qquad  \sum\limits_{f \in\mc{F}} \sum\limits_{i \in \mathcal{D}} x_{i,s}(f){P}_{i,f} \le P_{D,w}^{(\ell)}, \tag{\ref{PB:WA}.b}\label{PB:PDweak} \\
&  \qquad \Gamma_{k',k}(f) \geq 0,\, \begin{aligned} \forall k',k &\in \mc{D} \\ \forall f &\in \mc{F} \end{aligned} \,\Big|\, \begin{aligned} &x_{k',w}(f) = x_{k,s}(f) = 1,\\
        &P_{k}^{(\ell)}(f)>0, \end{aligned}  \tag{\ref{PB:WA}.c}\label{PB:WSWWweak}
\end{align}


It can be easily seen that problem~\eqref{PB:WA} has the same formulation of problem~\eqref{PB:SA}, with different optimization variables. Hence, the weak allocation problem can be efficiently solved in the \emph{Lagrangian dual domain}, as the previous strong allocation problem, obtaining
\begin{equation} 
\label{eq:LagDPweak}
\min_{\bm{\mu}\succeq 0} g_w(\bm{\mu}),
\end{equation}
where $\bm{\mu} = [\mu_0, \mu_1, \dots \mu_M]^T \in \mathbb{R}_{+}^{(U+1)}$ is the vector of the Lagrangian dual variables  associated to the constraints~\eqref{PB:PDweak} for $\mu_0$, and~\eqref{PB:PUweak} for $\mu_i$, $i \in 1, \dots M$.
The Lagrangian dual function $g_w(\bm{\mu})$ is
\begin{equation} 
\label{eq:LagWeak}
\begin{aligned}
&g_w(\bm{\mu})  = \hspace{-3pt} \max_{\substack{\mb{x} \in \mc{X}_w^{(\ell)}\\ \mathbf{P} \in \mc{P}_{w}^{(\ell)}}} \hspace{-3pt} U\left(\mb{x}_s^{(\ell)},\mb{P}_s^{(\ell)};\mb{x},\mb{P}\right) - K^{(\ell)} || \mb{P} - \mb{P}_w^{(\ell -1)}||^2  \\ &+ \sum_{j \in \mc{U}} \mu_j  \big( P_{U,i,w}^{(\ell)} - \hspace{-3pt} \sum_{f\in\mc{F}} P_{j}(f) \big)+ \mu_{0} \big(P_{D,w}^{(\ell)} \hspace{-2pt}- \hspace{-2pt} \sum_{f\in\mc{F}} \sum_{k \in \mc{D}} P_{k}(f) \big),
\end{aligned}
\end{equation}
where
\begin{equation} \label{eq:feasibleWeak}
\begin{aligned}
\mc{P}_{w}^{(\ell)} = \Big\{&\mb{P} \tc \mb{P} \succeq 0,\, P_{k}^{(\ell-1)}(f) \Gamma_{k',k}(f) \geq 0, \,\forall k',k \in \mc{D} \tc x_{k',w}(f) = x_{k,s}(f) = 1,\,  \forall f \Big\}.
\end{aligned}
\end{equation}
The Lagrangian auxiliary function which accounts for the contribution of the weak uplink-downlink allocation ($j,k$) on subcarrier $f$ is
\begin{equation}
\label{eq:auxLagWeak}
\begin{aligned}
L_{j,k,w}&(\mb{P}(f),\bm{\mu}) = \alpha_j R_j(f) + \alpha_k R_k(f) + U_s(\mb{x}_s^{(\ell)}, \mb{P}(f))\\
 &- K_\ell ||[P_j(f), P_k(f)] - [P_j^{(\ell-1)} (f),P_k^{(\ell-1)}(f)]||^2
- \mu_j P_j(f) - \mu_0 P_{k}(f).
\end{aligned}
\end{equation}
Hence, neglecting  the terms not involved in the optimization, the dual Lagrangian function~\eqref{eq:LagWeak} is equivalent to
\begin{equation}
\label{eq:lagCompWeak}
\begin{aligned}
g_w(\bm{\mu})\hspace{-2pt} = \hspace{-4pt}\sum_{f\in\mc{F}} \hspace{-4pt} \max_{\substack{\mb{x} \in \mc{X}_w^{(\ell)} \\ \mathbf{P}(f) \in \mc{P}_w^{(\ell)}}} \hspace{-3pt} \sum_{j \in \mc{U}} \sum_{k \in \mc{D}} x_{j,s}(f) x_{k,s}(f) L_{j,k,w}(\mb{P}(f),\bm{\mu}).
\end{aligned}
\end{equation}
Also in this case, the resolution of the maximization problem~\eqref{eq:lagCompWeak} is obtained computing first the power coefficients and then the allocation variables.

\FS{The power coefficients are evaluated by maximizing the auxiliary Lagrangian function~\eqref{eq:auxLagWeak}, keeping fixed a single weak couple $j \in \mc{U}$, $k \in \mc{D}$, and the previously selected strong users $\mb{x}_s^{(\ell)}$ and $\mb{P}^{(\ell)}_s$. Hence, for \emph{each possible couple of weak users} $j \in \mc{U}$ and $k \in \mc{D}$, we solve}
\begin{equation} \label{eq:maxLweak}
\begin{aligned}
\left[P^{*}_{j}(f),P^{*}_{k}(f)\right]=\argmax_{\mb{P}(f) \in \mc{P}_w^{(\ell)}} & \, L_{j,k,w}(\mb{P}(f),\bm{\mu}), \quad 
\begin{aligned}
j &\in \mc{U}, \\
k &\in \mc{D},
\end{aligned}
\end{aligned}
\end{equation}
Problem~\eqref{eq:maxLweak} is still a DC function maximized on a convex set, and its solution is described in Appendix~\ref{sec:powerAllocation}.

Once the  power coefficients are computed for every possible couple of users, for each subcarrier, the optimal allocation variables are obtained by selecting the couple which maximizes the Lagrangian function, i.e.,
\begin{equation} \label{eq:alloVarWeak}
x_{j(f),w}(f) = 1, x_{k(f),w}(f) = 1, \quad \forall f \in\mc{F},
\end{equation}
where $j(f)\in \mc{U}$ and $k(f) \in \mc{D}$ and it is 
\begin{equation}
[j(f), k(f)]= \arg\max_{j,k} L_{j,k,w}(\mb{P}^*(f),\bm{\mu}), \,\, f \in \mc{F}.
\end{equation}
\FS{All the allocation variable and the power coefficient of users not allocated 
are set to 0, i.e. 
\begin{equation}
\begin{aligned}
x_{j,w}(f) = 0, \quad& P_j(f) = 0,  \\
x_{k,w}(f) = 0, \quad& P_k(f) = 0,
\end{aligned} \quad
\begin{aligned}
\forall j \neq j(f) \in \mc{U}, \\
\forall k \neq k(f) \in \mc{D}.
\end{aligned}
\quad \forall f\in \mc{F}
\end{equation}}

The \emph{cutting plane ellipsoid} method~\cite{Yu2006} is employed to update $\bm{\mu}$ until convergence, as described in Appendix~\ref{sec:dual}.

The Weak user Allocation (WA) algorithm is summarized in Algorithm~\ref{alg:WA}. The main complexity burden is represented by computing  $F M N$ metrics at each iteration on $\bm{\mu}^{(t)}$.
At  convergence, the algorithm yields to local optimum variables, collected in vectors $\mb{x}_w^{(\ell)}$ and $\mb{P}_{w}^{(\ell)}$.

\begin{algorithm}
\footnotesize
\caption{Weak user Allocation (WA)}
\label{alg:WA}
\textbf{Input:}
$\mathbf{P} \leftarrow [\mb{P}_s^{(\ell)}, \mb{P}_w^{(\ell-1)}]$, $t \leftarrow 1$, $\Delta \leftarrow 1$; \textbf{Output:} $\mb{P}_w^{(\ell)}$, $\mb{x}_w^{(\ell)}$\;
\While{$\Delta > 0$}
{
	\For {$f \in \mc{F}$}
	{
		\For {$(j,k) \in (\mathcal{U},\mathcal{D})$}
		{			
			Compute $P_{j}(f)$ and $P_{k}(f)$ by solving~\eqref{eq:maxLweak};
		}
		Compute $x_{i,s}(f)$ as in~\eqref{eq:alloVarWeak};\\		
	}
	Update $\bm{\mu}^{(t)}$ as described in Appedix~\ref{sec:dual}\;
	$\Delta \leftarrow \left\Vert \bm{\mu}^{(t)} - \bm{\mu}^{(t-1)}\right\Vert$\;
	$t \leftarrow t + 1$\;
}
\end{algorithm}
\normalsize

\section{A low-complexity algorithm}
\label{sec:low-complexity}

\MM{Even if the BCD algorithm always converges to a local optimum, its  complexity depends on the number of iterations needed for  convergence, which is unknown \emph{a priori}. To address this issue, we propose here  a simplified  algorithm, which exploits the BCD structure but has a lower complexity and comparable performance.
The general idea is based on the observation  that, because of the specific structure of our system, which is based on interference cancellation,  the most important allocation decisions of the BCD algorithm are taken during the first iteration, i.e. for $\ell=1$, while in the remaining iterations the algorithm mainly switches between similar allocations and slowly  updates the power coefficients. 
In fact, since interference is cancelled  at the  receivers of  both uplink and downlink strong users, to a first approximation we can consider the allocation of the strong users almost decoupled from the allocation of the weak users. Being the channel allocations of  the strong users almost indipendent from the  weak users, it is the power allocation that takes a certain number of iterations before achieving convergence. Accordingly, we propose a low-complexity \FS{(LC)} algorithm divided in three steps: first, the strong users are allocated by means of the SA algorithm, second the WA algorithm is performed to allocate the weak users and, finally, a \emph{power redistribution algorithm} (\emph{PRA}) is employed in order to distribute the power available on the allocated users.}

\subsection{Power Redistribution Algorithm (PRA)}
\label{sec:REPOW}

\MM{The PRA is designed  to  optimally distribute the power coefficients once the channel allocations are fixed; hence, allocation variables $\mb{x} = [\mb{x}_s, \mb{x}_w]$ obtained by the resolution of problems~\eqref{PB:SA} and~\eqref{PB:WA} are fixed and not part of the optimization problem.}

Let us denote as $\mb{P}_U$ the vector collecting the power coefficients of the users allocated in the uplink, indifferently if weak or strong, i.e., $\mb{P}_U = \{ P_{i}(f), \, \forall i \in \mc{U} \tc x_{i,s}(f) = 1 \text{ or } x_{i,w}(f) = 1, \, \forall f\in \mc{F}\}$. Similarly, we denote as $\mb{P}_D$ the vector collecting the power coefficients of the users allocated in the downlink, i.e., $\mb{P}_D = \{ P_{i}(f), \, \forall i \in \mc{D} \tc x_{i,s}(f) = 1 \text{ or } x_{i,w}(f) = 1, \, \forall f \in \mc{F}\}$.
Under these assumptions, the power allocation problem can be rewritten as:
\begin{equation}
\begin{aligned}
\max_{\mb{P}_U,\mb{P}_D}& U(\mb{x},\mb{P}_U,\mb{P}_D) \\
\text{subject to }& \eqref{PB:PU},\,\eqref{PB:PD},\,\eqref{PB:WSWW}.
\end{aligned}
\end{equation}
Once again, we can use Lagrangian dual decomposition to relax the power budget constraints, obtaining the equivalent problem
\begin{equation}
\label{eq:dualsequential}
g(\bm{\mu}) = \sum_{f=1}^F  \max_{\mb{P}_U(f),\mb{P}_D(f)} L(\mb{P}_U(f),\mb{P}_D(f)),
\end{equation}
subject to NOMA constraint~\eqref{PB:WSWW} and with
\begin{equation} \label{eq:maxLf}
\begin{aligned}
L(\mb{P}_U(f),\mb{P}_D(f)) &= U_s(\mb{x}_s,\mb{P}_U(f),\mb{P}_D(f))
+ U_w(\mb{x}_w,\mb{P}_U(f),\mb{P}_D(f))\\
&-  \sum_{j \in \mc{U}(f)} \mu_j P_{j}(f) -  \sum_{k \in \mc{D}(f)} \mu_0 P_{k}(f),
\end{aligned}
\end{equation}
where $\mb{P}_U(f)$ and $\mb{P}_D(f)$ refer to the power coefficients of the users allocated on subcarriers $f$, $\mu_0$ is the Lagrangian variable associated to constraint~\eqref{PB:PD}, and $\mu_j$, $j = 1, \dots, M$, is the Lagrangian variable associated to constraint~\eqref{PB:PU}.

Problem~\eqref{eq:dualsequential} is now separable for each subcarrier. Furthermore, it is sum of $2A = 4$ difference of convex function due to the $2A$ users allocated on each subcarrier; hence, it is a DC function, where the expression of the concave $L_\text{cav}$ and convex $L_\text{vex}$ part are reported in~\eqref{eq:dc}.
In order circumvent the limitation of the NOMA constraint~\eqref{PB:WSWW}, we formalize a sequential version of the concave-convex procedure~\cite{Yuille}, firstly for downlink and then for uplink direction. In particular, the feasible set of the uplink problem consider the constraint only if the downlink power coefficient evaluated at that iteration are both non-zero.
At the $(t+1)$-th iteration, the power allocation problem can be expressed by:
\begin{equation} \label{eq:seq}
\begin{aligned}
\mb{P}_D^{(t+1)}(f) &= \argmax_{\mb{P}_D(f) \in \mc{P}_D} L_\text{cav}(\mb{P}_U^{(t)}(f),\mb{P}_D(f)) +\mb{P}_D^T \nabla_{\mb{P}_D(f)} L_\text{vex}(\mb{P}_U^{(t)}(f),\mb{P}_D^{(t)}(f)),  \\
\mb{P}_U^{(t+1)}(f) &= \argmax_{\mb{P}_U(f) \in \mc{P}_U} L_\text{cav}(\mb{P}_U(f),\mb{P}_D^{(t+1)}(f)) + \mb{P}_U^T \nabla_{\mb{P}_U(f)} L_\text{vex}(\mb{P}_U^{(t)}(f),\mb{P}_D^{(t+1)}(f)),
\end{aligned}
\end{equation}
where $\mc{P}_D = \{0 \preceq \mb{P}_D \preceq P_D\}$, $\mc{P}_U = \{0 \preceq \mb{P}_U \preceq P_U, P_{k}^{(t+1)}(f)P_{k'}^{(t+1)}(f)\Gamma_{k',k}(f) \geq 0,\, \forall k,k'\in\mc{D}(f)\}$. Note that $P_U$ and $P_D$ are inserted as upper bounds to prevent the algorithm to diverge; the NOMA constraint $\Gamma_{k',k}(f)$ is considered only if both the downlink power coefficients are non-zero during the same iteration.
The convergence of procedure~\eqref{eq:seq} is given by the following theorem.
\begin{theorem}
\label{theo:cccpSEQ}
The sequential procedure given in~\eqref{eq:seq}  generates a sequence  of power values such that at iteration $t+1$ it is  
\begin{equation}
L(\mb{P}_U^{(t+1)}(f),\mb{P}_D^{(t+1)}(f)) 
\geq L(\mb{P}_U^{(t)}(f),\mb{P}_D^{(t)}(f)), \nonumber
\end{equation}
 which finally converges on local optimum $\mb{P}^{*}(f)$.
\begin{proof}
Please refer to Appendix~\ref{sec:dim2}
\end{proof}
\end{theorem}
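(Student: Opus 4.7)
My plan is to reduce Theorem \ref{theo:cccpSEQ} to two applications of the standard concave--convex procedure (CCCP) majorization argument, one per block update in \eqref{eq:seq}. The key fact I will use repeatedly is that, since $L_\text{vex}$ is convex, the first-order Taylor expansion of $L_\text{vex}$ about any point is a global lower bound, with equality at the expansion point. Consequently, the surrogate objective obtained from $L_\text{cav}+(\text{linearization of }L_\text{vex})$ is a pointwise lower bound of the true $L$ that is tight at the linearization point.

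First, I would analyze the downlink update. With $\mb{P}_U$ frozen at $\mb{P}_U^{(t)}(f)$, define
\begin{equation}
\tilde L_D(\mb{P}_D) = L_\text{cav}(\mb{P}_U^{(t)}(f),\mb{P}_D) + \mb{P}_D^T \nabla_{\mb{P}_D} L_\text{vex}(\mb{P}_U^{(t)}(f),\mb{P}_D^{(t)}(f)).
\end{equation}
By convexity of $L_\text{vex}$ in $\mb{P}_D$, $\tilde L_D(\mb{P}_D)\le L(\mb{P}_U^{(t)}(f),\mb{P}_D)$ up to an additive constant, with equality at $\mb{P}_D^{(t)}(f)$. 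Since $\mb{P}_D^{(t)}(f)$ remains feasible in $\mc{P}_D$ (which only contains the box constraint $0\preceq\mb{P}_D\preceq P_D$ and is independent of $\mb{P}_U$), and $\mb{P}_D^{(t+1)}(f)$ is by construction the maximizer of $\tilde L_D$ over $\mc{P}_D$, chaining the two inequalities yields $L(\mb{P}_U^{(t)}(f),\mb{P}_D^{(t+1)}(f)) \ge L(\mb{P}_U^{(t)}(f),\mb{P}_D^{(t)}(f))$. Then I would repeat the identical argument for the uplink update, linearizing $L_\text{vex}$ now with respect to $\mb{P}_U$ about $\mb{P}_U^{(t)}(f)$, with $\mb{P}_D$ frozen at $\mb{P}_D^{(t+1)}(f)$. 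Concatenating the two block inequalities gives the monotonicity claim $L(\mb{P}_U^{(t+1)}(f),\mb{P}_D^{(t+1)}(f))\ge L(\mb{P}_U^{(t)}(f),\mb{P}_D^{(t)}(f))$.

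For the convergence to a local optimum, I would invoke the classical argument: the sequence $\{L(\mb{P}_U^{(t)}(f),\mb{P}_D^{(t)}(f))\}_t$ is monotone non-decreasing and is bounded from above because the feasible region is compact (the box constraints $0\preceq \mb{P}_U\preceq P_U$, $0\preceq\mb{P}_D\preceq P_D$ are explicitly imposed, which is exactly why the authors included them in $\mc{P}_U,\mc{P}_D$). Hence the sequence converges. Compactness of the feasible set also implies that the iterates admit convergent subsequences, and at any limit point $\mb{P}^*(f)$ the linearized surrogate coincides with the original function up to first order, so the first-order optimality conditions of the surrogate translate into the KKT conditions of the original problem, identifying $\mb{P}^*(f)$ as a stationary (hence local optimum) point of \eqref{eq:maxLf}.

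The most delicate step I expect is handling the NOMA feasibility set $\mc{P}_U$, because its NOMA constraint $P_k^{(t+1)}(f)P_{k'}^{(t+1)}(f)\Gamma_{k',k}(f)\ge 0$ is activated only when the freshly updated downlink powers are both strictly positive, and hence the effective feasible set for $\mb{P}_U$ can change between iterations. To make the CCCP argument go through I must verify that $\mb{P}_U^{(t)}(f)$ is still feasible under the updated constraint (so that the tightness equality $\tilde L_U(\mb{P}_U^{(t)}(f))=L(\mb{P}_U^{(t)}(f),\mb{P}_D^{(t+1)}(f))$ can actually be used as a valid lower benchmark for the maximum). This follows from the sequential design: whenever $P_k^{(t+1)}(f)P_{k'}^{(t+1)}(f)>0$, the previous iterate either already satisfied $\Gamma_{k',k}(f)\ge 0$ (because $\Gamma_{k',k}(f)$ depends solely on the uplink powers through \eqref{eq:GAMMA}, which do not change during the downlink update, and the previous $\mb{P}_U^{(t)}(f)$ was constructed under a constraint at least as restrictive), or can be retained with negligible perturbation; this guarantees that the surrogate maximization in \eqref{eq:seq} is always initialized from a feasible point, preserving the monotonicity chain and, together with the above, establishing the theorem.
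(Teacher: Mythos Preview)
Your proposal is correct and follows essentially the same approach as the paper: both arguments apply the standard CCCP majorization (convexity of $L_\text{vex}$ gives a tight linear minorant, maximizing the resulting concave surrogate cannot decrease $L$) first to the downlink block and then to the uplink block, and conclude convergence from monotonicity plus compactness of the box-constrained feasible set. Your treatment is in fact slightly more careful than the paper's, which does not explicitly address the iteration-dependence of the NOMA constraint in $\mc{P}_U$ that you flag as the delicate step.
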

Finally, the optimal dual variables are updated as described in Appendix~\ref{sec:dual}.
The power redistribution algorithm proposed is summarized in Algorithm~\ref{alg:seq}.

\begin{algorithm}
\footnotesize
\caption{Power Redistribution Algorithm (PRA)}\label{alg:seq}
\textbf{Input:}
$\mathbf{x} \leftarrow$ solution of~\eqref{PB:SA} and~\eqref{PB:WA}, $t \leftarrow 1$, $\Delta \leftarrow 1$; \textbf{Output:} $\mb{P}$\;
\While{$\Delta > 0$}
{
	\For {$f \in \mc{F}$}
	{
	Compute $\mb{P}_{U}(f)$ and $\mb{P}_{D}(f)$ by sequential programming~\eqref{eq:seq};		
	}
	Update $\bm{\mu}^{(t)}$ as described in Appedix~\ref{sec:dual}\;
	$\Delta \leftarrow \left\Vert \bm{\mu}^{(t)} - \bm{\mu}^{(t-1)}\right\Vert$\;
	$t \leftarrow t + 1$\;
}
\end{algorithm}
\normalsize

\section{Numerical Results}
\label{sec:results}
This Section will provide an evaluation of the performance of the proposed schemes.
\FS{To simplify the notation, we denote as BCD the overall iterative procedure defined in~\eqref{eq:updateRule}, and as LC the low-complexity approach given by using SA, WA and PRA in sequence, as described in Section \ref{sec:low-complexity}.}
As a term of comparison, we also show the results for the scheme proposed in~\cite{Abrardo2019}, referred as WMMSE, and the two optimization schemes proposed in~\cite{Sun2017}, labeled as REF and SCA, respectively. It is worth observing that all algorithms, with the sole exception of  REF, make use of sub-optimal iterative approaches that, at each iteration, solve an intermediate optimization problem to ultimately obtain a local optimum of the weighted sum-rate. The optimal reference algorithm, REF, is still an iterative approach, but it is based on  monotonic optimization theory, as proposed in~\cite{Zhang2010}, which guarantees the convergence to the global optimum. For comparison purposes we also plot the results obtained for the OMA-FD case, obtained using the approach proposed in~\cite{Xiao17}, which is shown to be quasi-optimal.

To get an insight of the complexity of the various schemes, we give in Table~\ref{TableComplexity} the computational complexity of the algorithms in terms of number of elementary operations of the dominant term per iteration, following the approach in~\cite{Moretti13}. \FS{In the same table we also present the number of iterations needed for convergence.} 
It is worth noting that, even if REF and SCA have the same complexity per iteration, the latter requires a much lower number of iterations~\cite{Sun2017}. 
For the algorithms we propose, the separation of the allocation of strong and weak users in two distinct steps noticeably reduces \FS{the complexity per iteration} with respect to REF and SCA. \FS{However, the BCD algorithm needs a certain number of SA-WA iteration to converge, and, in turn, each step needs a certain number of iterations of the ellipsoid method~\cite{Yu2006}. This lead to the largest number of iterations of all compared schemes.}
\FS{On the other hand, the LC algorithm shows a negligible complexity overhead per iteration respect to the BCD, but it only needs the iterations for the ellipsoid method.} \FS{For the proposed algorithms, we presented in the table the maximum number of iteration needed for the convergence with the maximum number of users considered, i.e. $M = 50$.}
\FS{In Fig.~\ref{fig:complexity}, we show the overall computation complexities as a function of $M$, in the case of the same number of uplink and downlink users, i.e., $M = N$.}
\FS{It can be seen that the BCD algorithm is computationally cheaper respect to both REF and SCA when the number of users is more than 10. In every case, the LC approach has always a lower complexity.}
Finally, the WMMSE approach presented in~\cite{Abrardo2019} has the lowest complexity among all considered solutions.

\begin{figure}[ht]
\centering
\begin{floatrow}
\capbtabbox{%
\footnotesize
\begin{tabular}{cp{65pt}p{40pt}} 
\toprule
Algorithm & Complexity per~iteration & Number of iterations\\
\midrule
REF & $O( F M^2 N^2 )$& $\approx500$\\
SCA & $O( F M^2 N^2 )$& $\approx50$\\
BCD & $O( 2 F M N )$& $\leq3600$\\
LC & $O( F (2 M N +1))$ & $\leq120$\\
WMMSE & $O( F (M + N))$& $\approx200$\\
\bottomrule
\end{tabular}
\normalsize}
{\caption{Complexity comparison of the algorithms.} \label{TableComplexity}}
\ffigbox{%
\includegraphics[width=7 cm]{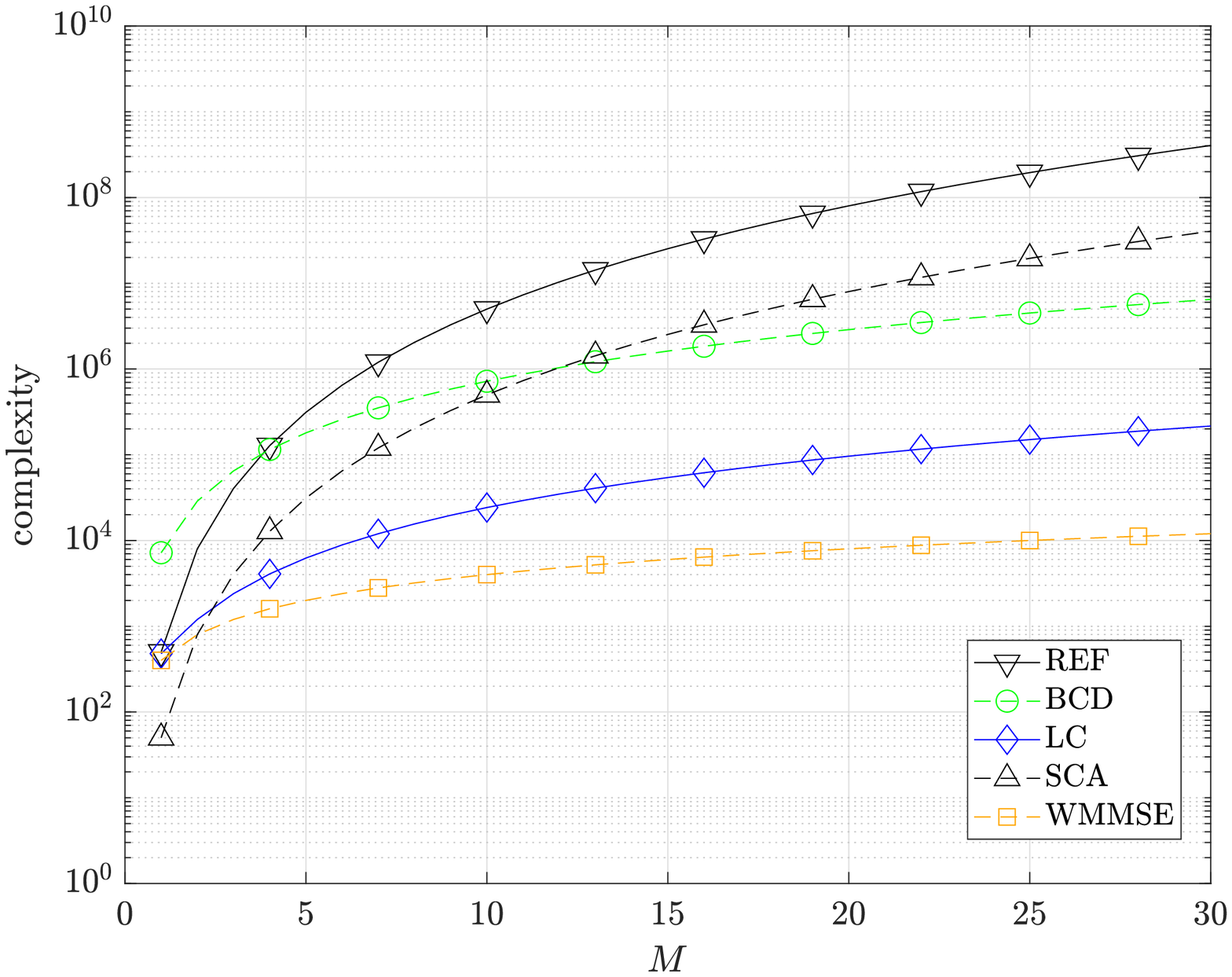}}{
\caption{Overall complexity as a function of the number of users $M$, when $M = N$.}
\label{fig:complexity}}
\end{floatrow}
\end{figure}

\FS{To examine the performance of the allocation schemes, we adopt the Monte Carlo method, averaging the results for 1000 of simulation instances.} The simulated scenario is a single \FS{small} cell with radius $R=100$ m and $F = 6$ subcarriers with the same number of uplink and downlink users, i.e., $M = N$. \MM{The  number of channels and users  is set  to  test the network in overload conditions.} 
\FS{The noise power is set to -121 dBm for all receiver.}
At each instance of the Monte Carlo simulations, the positions of the mobile users are randomly generated in the cell with a minimum distance from the serving BS of $30$ m. 
\FS{To simulate an outdoor scenario in an urban area, the propagation channel is generated considering the presence of log-normally distributed shadowing with standard deviation $\sigma_{SH}= 8$ dB; the path loss attenuation is exponentially proportional to the distance between the transmitters and receivers with the path loss exponent $\delta = 4$.} For each subcarrier, we assume uncorrelated fading  with channel coefficients generated from the complex Gaussian distribution $\mathcal{CN}(0,1)$. The SI cancelation factor $\zeta(f)$, $f\in\mc{F}$, at the BSs is set to a constant value of 110 dB, that is a reasonable value for the considered scenario~\cite{Goyal17}. To achieve a a certain degree of fairness of the allocation, \FS{especially for the edge users,} the weights of the utility function are proportional to the distance  $d_i$  between user $i$ and the BS, i.e.,  $\alpha_i = | \frac{d_i}{\max_{i \in \mc{U},\mc{D}} d_i}|^2$.

\MM{Fig.~\ref{fig:BCDvsLC} compares the performance of the BCD and the LC approaches by showing the utility $U$ as a function of the steps of the BCD algorithm, for $M = N = 6$, $P_U = 14$ dBm, $P_D = 24$ dBm.} \MM{It is worth noting that the odd steps represent the strong users'  allocations while the even steps represents the weak users' allocations. The solution obtained by the LC and REF algorithms are also plotted. Similar plots are obtained with different set of parameters. Firstly, we can note the impact of the regularization term on the procedure. When the regularization is not employed, the monotone behaviour of the sum rate is not assured and Theorem~\ref{theo:convBCD} does not hold any more. Otherwise, a correct value of $K^{(\ell)}$ leads to the proper convergence. Now, let us focus on the performance of the LC algorithm, which obtains the same performances of the BCD at the convergence. For this motivation, the following plot will present the LC performances only.}


Figs.~\ref{fig:3userPU} and~\ref{fig:3userPD} show $U$ as a function of the available maximum transmit powers in uplink and downlink, respectively, for $M = N = 6$. In particular, in Fig.~\ref{fig:3userPU} the maximum transmit power $P_D$ in the downlink is set to 20 dBm, whereas in Fig.~\ref{fig:3userPD} the maximum transmit power $P_U$ in the uplink is set equal to $14$ dBm. In all cases the proposed \FS{LC scheme} clearly outperform the OMA-FD scheme, SCA and WMMSE. 
The performance gap between  \FS{LC} and REF schemes is negligible, showing that it is  possible to obtain performance close to the optimum with a much lower implementation complexity. It is of particular importance the comparison with SCA, which is another 'low-complexity' algorithm designed to address the NOMA-FD problem. While in some scenarios REF and SCA are very close~\cite{Sun2017}, in our setting SCA performs  worse than both REF and LC. This probably happens because SCA performance are very sensitive to the choice of input parameters, which need to be optimized with great care for each different scenario.   

\begin{figure}[bth]
\centering
\begin{subfigure}[t]{0.45\textwidth}
\includegraphics[width=7 cm]{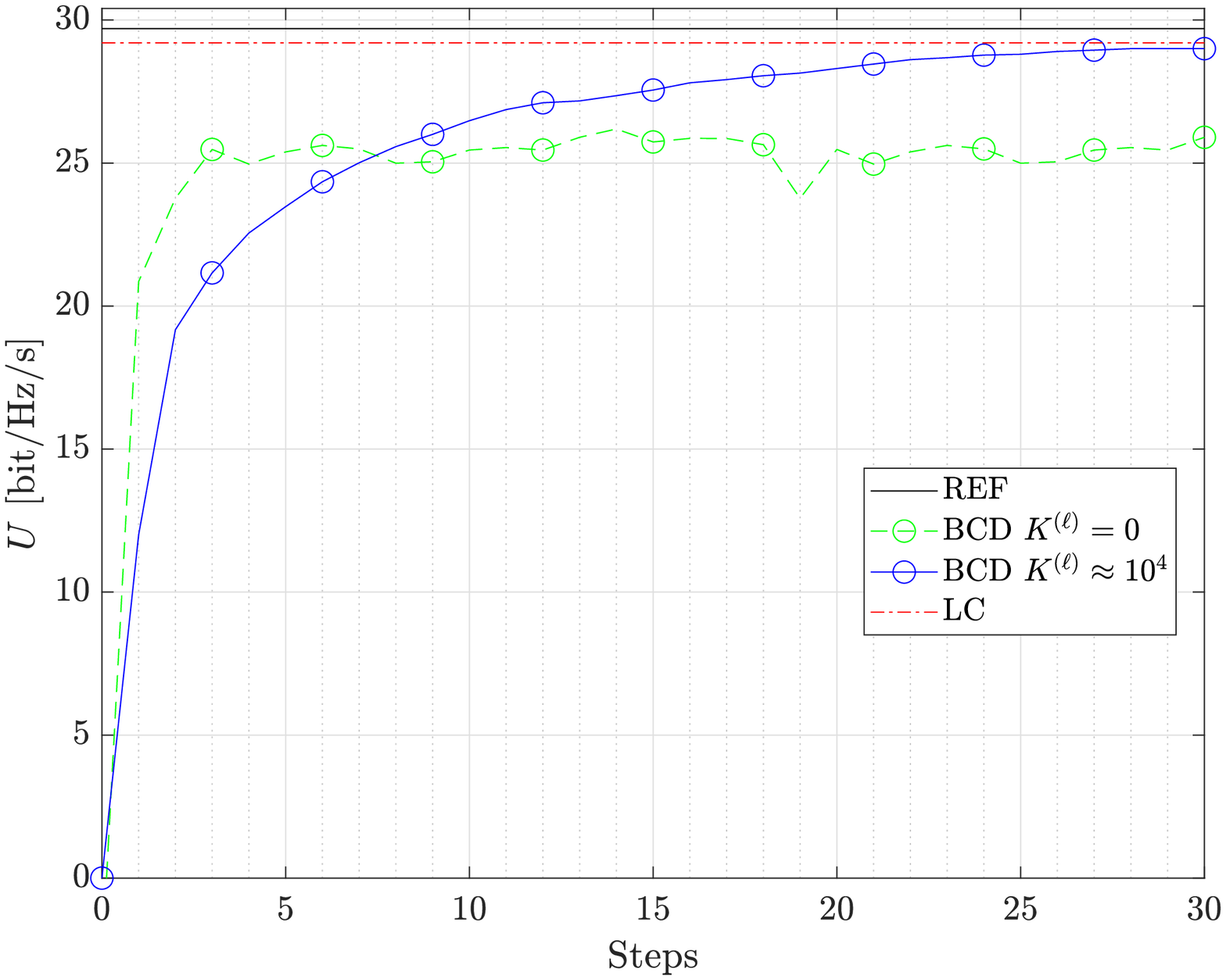}
\caption{$U$ vs the BCD steps compared with the result obtained by LC and REF, for $P_U = 14$ dBm, $P_D = 24$ dBm.}
\label{fig:BCDvsLC}
\end{subfigure}
~~
\begin{subfigure}[t]{0.45\textwidth}
\includegraphics[width=7 cm]{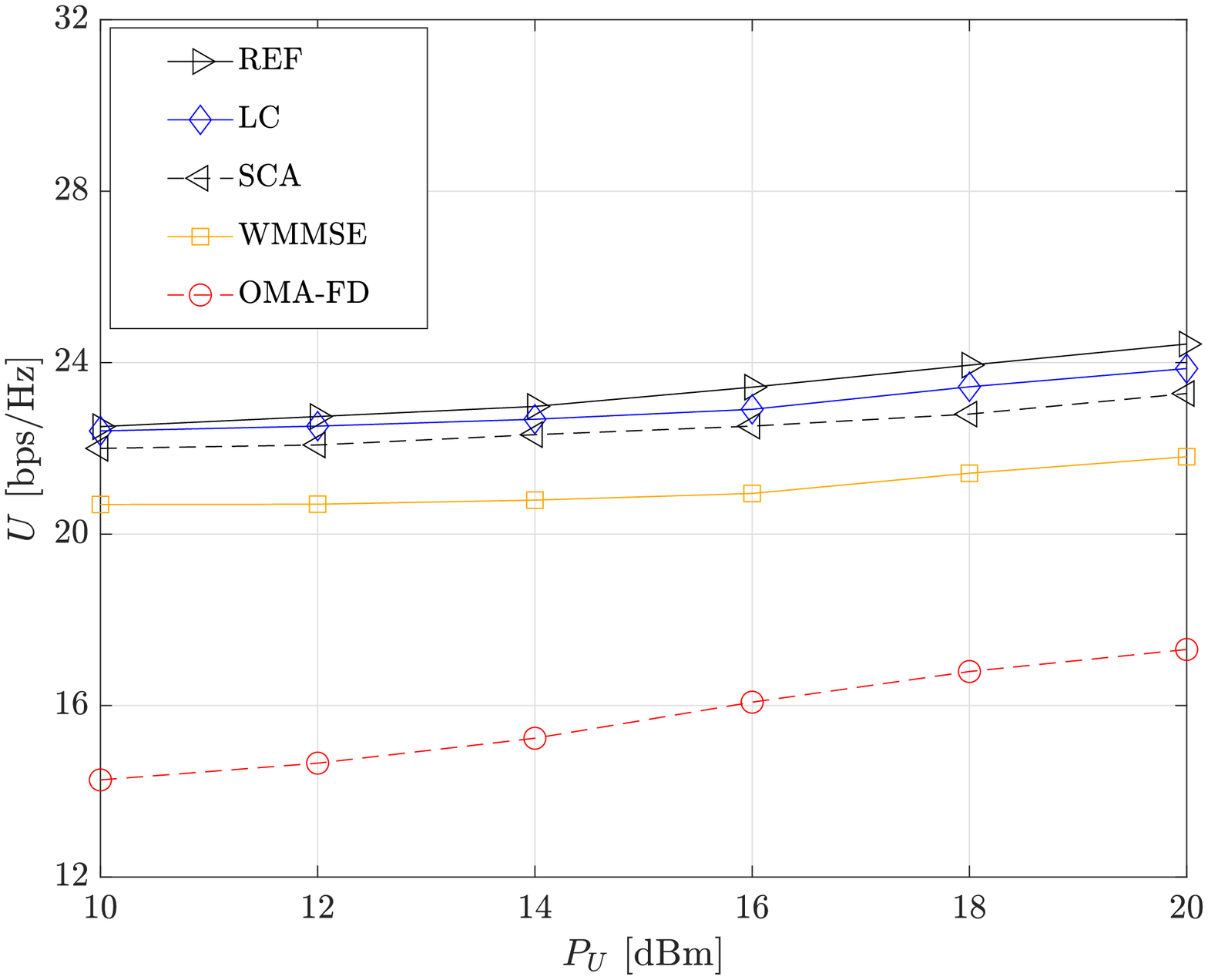}
\caption{$U$ as a function of the maximum transmitting powers in uplink, $P_D$ = 20 dBm.}
\label{fig:3userPU}
\end{subfigure}
\\
\begin{subfigure}[t]{0.45\textwidth}
\includegraphics[width=7 cm]{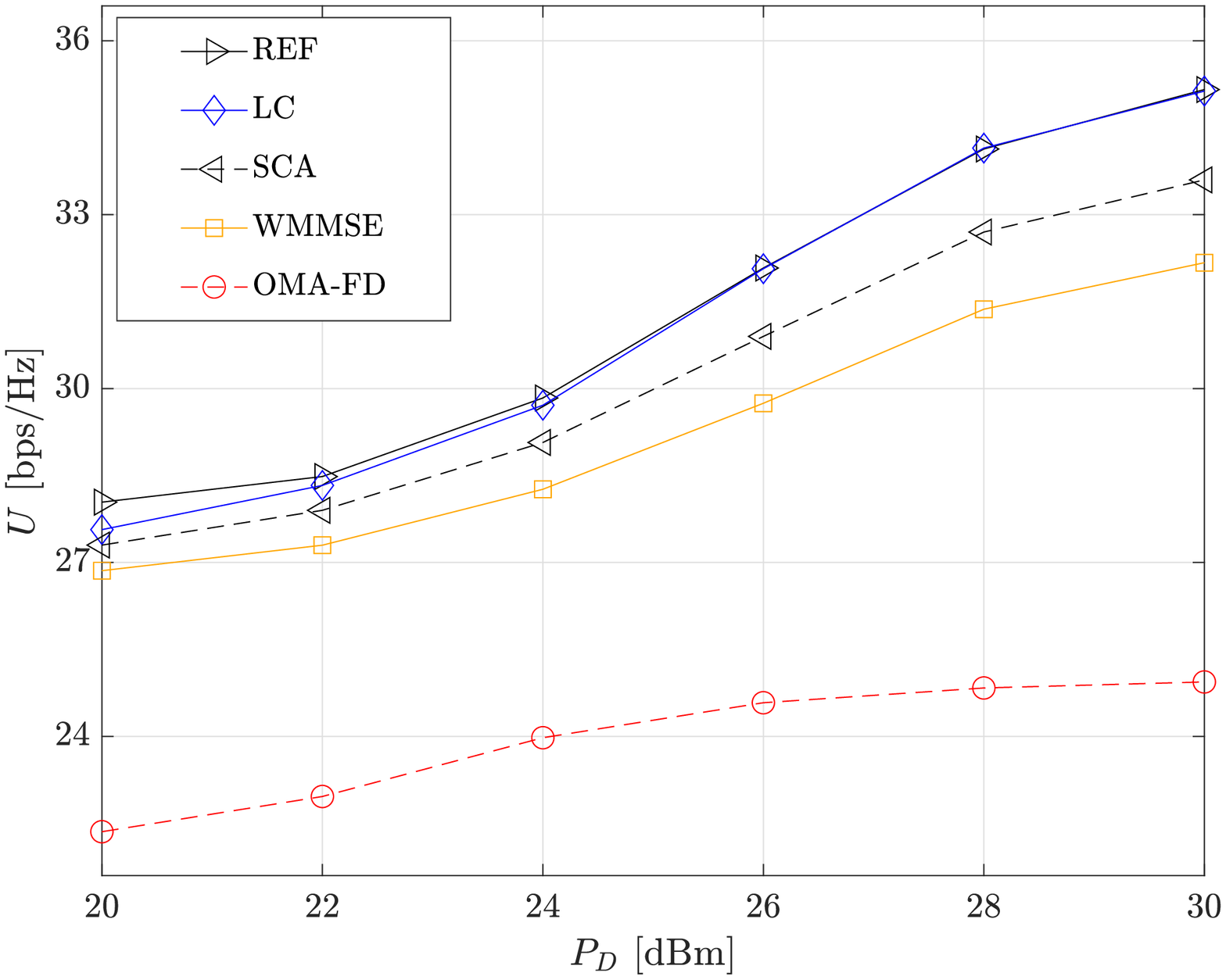}
\caption{$U$ as a function of the maximum transmitting power in downlink, $P_U$ = 14 dBm.}
\label{fig:3userPD}
\end{subfigure}
~~
\begin{subfigure}[t]{0.45\textwidth}
\centering
\includegraphics[width=7 cm]{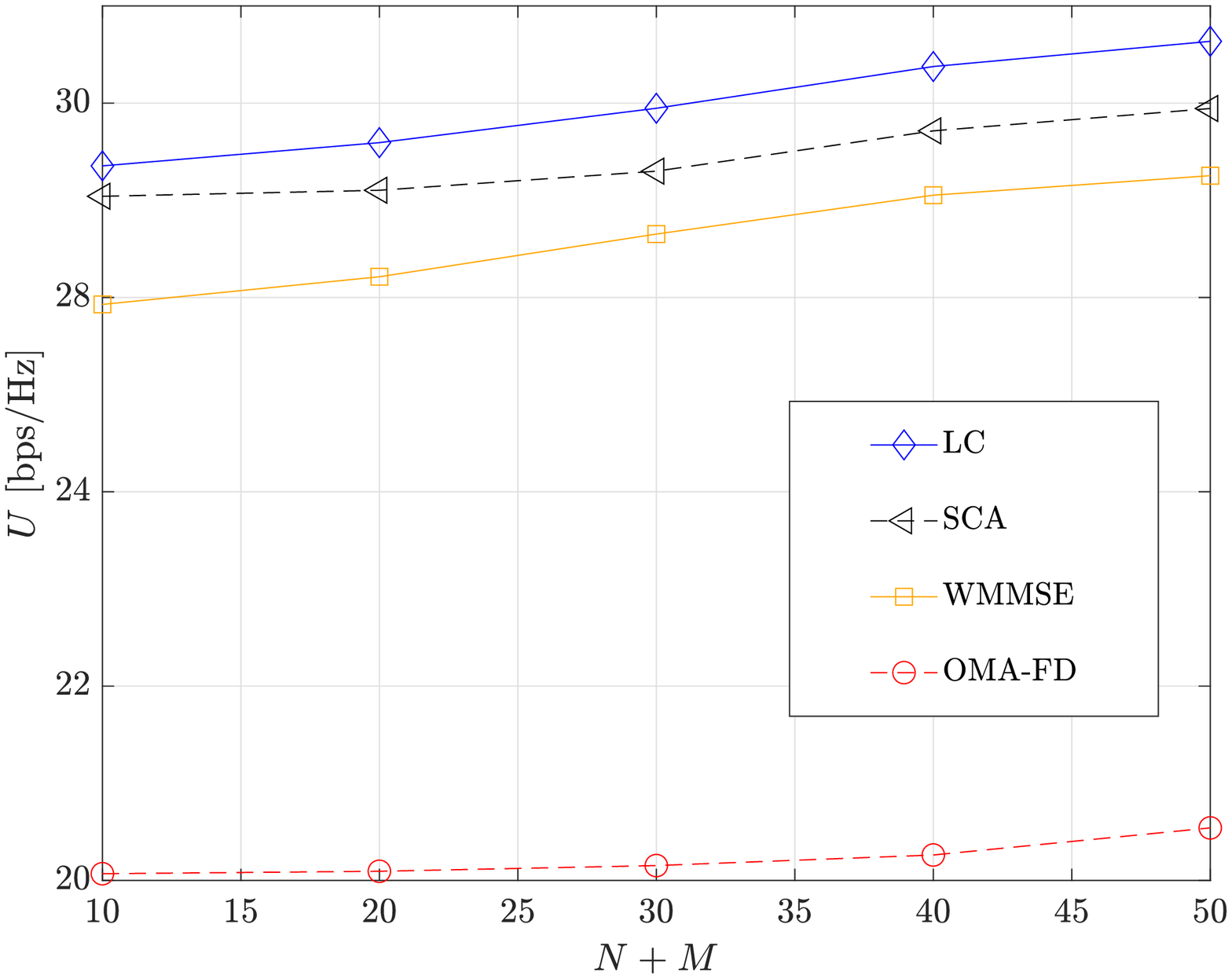}
\caption{$U$ as a function of the total number of user in the cell $N+M$, $N = M$, $P_U = 14$ dBm, $P_D = 20$ dBm.}
\label{fig:varN}
\end{subfigure}
\caption{Performance evaluation for $F=6$. For a) b) and c) the number of users is $M = N = 6$. }
\end{figure}

Due to the simulations'  computational load, the results obtained for more crowded scenarios are reported in the next figure for \FS{LC}, SCA, and WMMSE schemes only. 
Fig.~\ref{fig:varN} shows $U$ as a function of the number of users in the cell, for $P_U = 14$ dBm and $P_D = 20$ dBm. As expected, all NOMA schemes outperform  OMA for any number of users. Moreover, once again LC attains the best results,  outperforming \FS{both} SCA \FS{and WMMSE}.  

\section{Conclusions} \label{sec:conclusions}

In this paper, the power and channel allocation problem for multicarrier non-orthogonal multiple access (NOMA) full duplex (FD) systems has been investigated. Following the block coordinated descent (BCD) approach, we have proposed two algorithm based on the decomposition of the original allocation problem in lower-complexity sub-problems solvable in the Lagrangian dual domain with a great reduction of the computational load. The proposed approaches allow to achieve performances that are reasonably close to the optimum and outperforms more complex algorithms addressing the same optimization problem.

\begin{appendices}
\section{Solution of power allocation problem}
\label{sec:powerAllocation}
\renewcommand{\theequation}{\thesection.\arabic{equation}}
\setcounter{equation}{0}

Considering an iteration $\ell$ of the BCD algorithm, power allocation problems~\eqref{eq:maxLstrong},~\eqref{eq:maxLweak} can be expressed by the following general formulation:
\begin{equation}
\label{eq:maxL}
[P_{j}^*(f), P_{k}^*(f)] = \argmax_{\mb{P}(f) \in \mc{P}_{m}^{(\ell)}} L_{j,k,m}(\mathbf{P}(f), \bm{\mu}), \, \,
\begin{aligned}
j &\in \mc{U}, \\
k &\in \mc{D},
\end{aligned}
\end{equation}
with $m \in \{s,w\}$ indicating if the problem regards the SA or the WA and, accordingly, the feasible set considered.
Let us firstly focus to provide a convex and bounded formalization for $\mc{P}_{s}^{(\ell)}$ and $\mc{P}_{w}^{(\ell)}$. For convenience, we will focus on the strong feasible set only; however, the description can be easily adapted for the weak feasible set.
Without loss of generalization, feasible set~\eqref{eq:feasibleStrong} can be expressed by taking only into account the couple of users under examination $j \in \mc{U}$ and $k \in \mc{D}$ as $\mc{P}_{s}^{(\ell)} = \big\{0 \leq P_{j}(f) \leq P_U, 0 \leq P_{k}(f) \leq P_D, P_{k'}^{(\ell - 1)} (f)\Gamma_{k',k}(f) \ge 0 \tc x_{k,s}(f) = x_{k',w}(f)  = 1\big\}.$
If $P_{k'}^{(\ell-1)}(f) = 0$, the NOMA constraint is always verified.
Otherwise, if $P_{k'}^{(\ell-1)}(f) > 0$, the feasible set can be empty depending on the channel gains involved in $\Gamma_{k',k}(f)$. 
Indexing as $j' \in \mc{U}(f)$ and $k'\in\mc{D}(f)$ the previously allocated weak users on $f$, the expression of~\eqref{eq:thetaDef} can be simplified in $\Gamma_{k',k}(f) = \theta_{j,f}^{(k',k)} P_{j}(f) + \theta_{j',f}^{(k',k)} P_{j'}(f) + \delta_f^{(k',k)}.$ 
Accordingly, the feasible set $\mathcal{P}_{s}^{(\ell)}$ can be expressed by
\begin{equation}
\label{eq:feasible}
\begin{aligned}
&\mathcal{P}_s^{(\ell)}= \left\{ 0 \leq P_{k,f} \leq P_D \right\} \cap 
\begin{cases}
0 \leq P_{j}(f) \leq \left(P_U, \bar{P}_j(f) \right)^-, &\theta_{j,f}^{(k',k)} < 0, P_{k'}^{(\ell-1)}(f) > 0 \\
\left(\bar{P}_j(f)\right)^+  \leq P_{j}(f) \leq P_U, &\theta_{j,f}^{(k',k)} > 0, P_{k'}^{(\ell-1)}(f) > 0  \\
0 \leq P_{j}(f) \leq P_U, &P_{k'}^{(\ell-1)}(f) = 0, \\
\emptyset, & \theta_{j,f}^{(k',k)} = 0
\end{cases}
\end{aligned}
\end{equation}
where $(a,b)^- = \min(a,b)$ and $(a)^+ = \max(0,a)$ and 
$\bar{P}_j(f) = - \frac{\theta_{j',f}^{(k',k)} P_{j'}(f) + \delta_f^{(k',k)}}{\theta_{j,f}^{(k',k)}}$.
If not empty, set~\eqref{eq:feasible} is a convex and bounded. The same procedure can be applied on $\mc{P}_w^{(\ell)}$.
A visualization of this feasible set is shown in Figure~\ref{fig:feas2}. 

\begin{figure}[htb]
\centering
\begin{subfigure}[c]{0.3\textwidth}
\begin{tikzpicture}[scale = 0.60]
\draw [->] (0,-0.3) -- (0,4.5);
\draw [->] (-0.3,0) -- (5,0);
\draw (-0.3,3)--(4.5,3);
\draw (3,-0.3) -- (3,4.1);
\draw (4,-0.3) -- (4,4.1);
\draw [thick, fill, fill opacity = 0.1] plot [mark=*] coordinates {(0,0)(0,3)(3,3)(3,0)(0,0)};
\node at (1.5,1.5) {$\mathcal{P}_s^{(\ell)}$};
\node at (4.8,-0.45) {$P_{j}(f)$};
\node at (-0.3,4.8) {$P_{k}(f)$};
\node at (5,3)  {$P_D$};
\node at (2.7,4.4){$\bar{P}_j(f)$};
\node at (4,4.5){$P_U$};
\end{tikzpicture}
\caption{$\theta_{j,f}^{(k',k)} > 0, P_{k'}^{(\ell-1)}(f) > 0$}
\end{subfigure}
~
\begin{subfigure}[c]{0.3\textwidth}
\begin{tikzpicture}[scale = 0.60]
\draw [->] (0,-0.3) -- (0,4.5);
\draw [->] (-0.3,0) -- (5,0);
\draw (-0.3,3)--(4.5,3);
\draw (1,-0.3) -- (1,4.1);
\draw (4,-0.3) -- (4,4.1);
\draw [thick, fill, fill opacity = 0.1] plot [mark=*] coordinates {(1,0)(1,3)(4,3)(4,0)(1,0)};
\node at (2.5,1.5) {$\mc{P}_s^{(\ell)}$};
\node at (5.3,-0.45) {$P_{j}(f)$};
\node at (-0.3,4.8) {$P_{k}(f)$};
\node at (5,3)  {$P_D$};
\node at (1.5,4.5){$\bar{P}_j(f)$};
\node at (4,4.5){$P_U$};
\end{tikzpicture}
\caption{$\theta_{j,f}^{(k',k)} < 0, P_{k'}^{(\ell-1)}(f) > 0$}
\end{subfigure}
~
\begin{subfigure}[c]{0.3\textwidth}
\begin{tikzpicture}[scale = 0.60]
\draw [->] (0,-0.3) -- (0,4.5);
\draw [->] (-0.3,0) -- (5,0);
\draw (-0.3,3)--(4.5,3);
\draw (4,-0.3) -- (4,4.1);
\draw [thick, fill, fill opacity = 0.1] plot [mark=*] coordinates {(0,0)(0,3)(4,3)(4,0)(0,0)};
\node at (2,1.5) {$\mathcal{P}_s^{(\ell)}$};
\node at (5.3,-0.45) {$P_{j}(f)$};
\node at (-0.3,4.8) {$P_{k}(f)$};
\node at (5,3)  {$P_D$};
\node at (4,4.5){$P_U$};
\end{tikzpicture}
\caption{$P_{k'}^{(\ell-1)}(f) = 0$}
\end{subfigure}
\caption{Geometric representation of the feasible set with different parameters. }
\label{fig:feas2}
\end{figure}
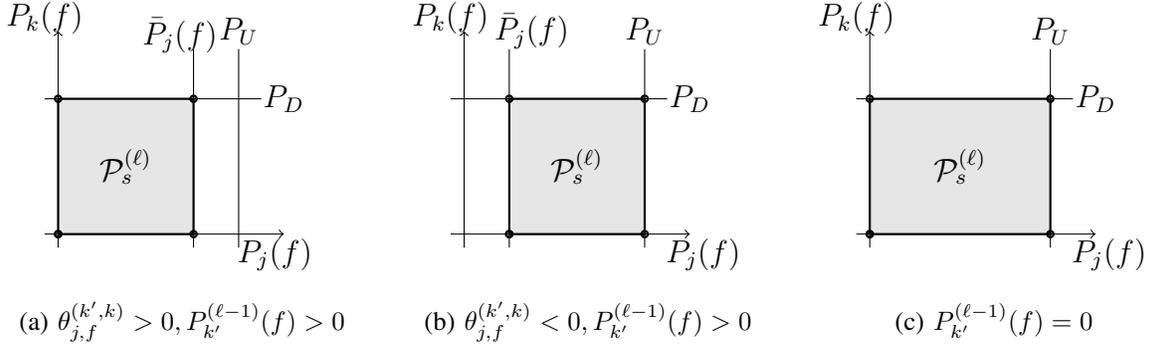

$L_{j,k,m}(\mb{P}(f), \bm{\mu})$ is a difference of convex function in $\mb{P}(f)$, i.e., it can be written as
\begin{equation}
L_{j,k,m}(\mb{P}(f), \bm{\mu}) = L_\text{cav}(\mathbf{P}(f)) + L_\text{vex}(\mathbf{P}(f),\bm{\mu}),
\end{equation}
where the expressions of the concave and convex parts are reported in~\eqref{eq:dc}.
Hence, problem~\eqref{eq:maxL} can be efficiently solved using the \emph{concave-convex procedure}~\cite{Yuille}. To elaborate, for each iteration $t$, the convex part of the function is linearized at the $t$-th stationary point $\mb{P}^{(t)}(f)$, obtaining a concave objective function. Then, the maximum can be easily found by a numerical solver.
At each iteration the problem becomes
\begin{equation}
\begin{aligned}
\label{eq:argmax}
\begin{bmatrix}
P_j^{(t+1)}(f) \\
P_k^{(t+1)}(f)
\end{bmatrix}
=& \argmax_{P_{j}(f),P_{k}(f) \in \mathcal{P}_{m}^{(\ell)}} \Big\{ L_\text{cav}(\mathbf{P}(f))
+
\begin{bmatrix}
P_j(f) \\
P_k(f)
\end{bmatrix}^T \nabla_{(j,k)} L_\text{vex}(\mathbf{P}^{(t)}(f)) \Big\}
\end{aligned}
\end{equation}
where $\nabla_{(j,k)}$ represents the gradient respect to $P_{j}(f)$ and $P_{k}(f)$, reported in~\eqref{gradient}, while
\begin{equation}
\label{eq:intSetC}
\mc{C}(i,f) =
\begin{cases}
\mc{A}(f) \setminus i &x_{i,s}(f) = 1, \quad \forall i, \\
\mc{D}(f) &x_{i,w}(f) = 1, \quad i \in \mc{U},\\
\mc{U}(f) &x_{i,w}(f) = 1, \quad i \in \mc{D}.
\end{cases}
\end{equation} 
represents the set of users that receive interference from user $i$ on subcarrier $f$. 
\begin{figure*}[tb]
\smaller
\begin{equation}
\label{eq:dc}
\begin{aligned}
L_\text{cav}(\mathbf{P}(f)) &= \sum_{i \in \mc{A}(f)} \alpha_i \log_{2}\left( |h_{i,i}(f)|^2 P_{t}(f) + \sum_{n \in \mathcal{I}_i(f)} |h_{n,i}(f)|^2 P_{n}(f) + \sigma^2 \right) - K^{(\ell)} || \mb{P}(f) - \mb{P}^{(\ell-1)}(f) ||^2, \\
L_\text{vex}(\mathbf{P}(f)) &= - \sum_{i \in \mc{A}(f)} \alpha_i \log_{2} \left(\sum_{n \in \mathcal{I}_i(f)} |h_{n,i}(f)|^2 P_{i}(f) + \sigma^2 \right) - \sum_{j \in \mc{U}(f)} \mu_j P_{j}(f) -  \sum_{k \in \mc{D}(f)} \mu_0 P_{k}(f).
\end{aligned}
\end{equation}
\begin{equation}\label{gradient}
\begin{aligned}
\nabla_{(j,k)} L_\text{vex}(\mathbf{P}(f)) = -\frac{1}{\ln2} 
\begin{bmatrix}
\displaystyle\sum_{i \in \mc{C}_j(f)} \frac{\alpha_i |h_{j,i}(f)|^2}{\sum_{n \in \mc{I}_i(f)} |h_{n,i}(f)|^2 P_{n}(f) + \sigma^2} + \mu_j \ln2 \\
\displaystyle\sum_{i \in \mc{C}_k(f)} \frac{\alpha_i |h_{k,t}(f)|^2}{\sum_{n \in \mathcal{I}_i(f)} |h_{n,i}(f)|^2 P_{n}(f) + \sigma^2} + \mu_0 \ln2
\end{bmatrix}
\end{aligned}
\end{equation}
\begin{center}
\line(1,0){450}
\end{center}
\end{figure*}
The convergence of procedure~\eqref{eq:argmax} is given by the following theorem.
\begin{theorem}
\label{theo:cccp}
The iterative procedure given in~\eqref{eq:argmax} generates a sequence $\mb{P}^{(t)}(f)$ rendering $L_{j,k,m}(\mb{P}^{(t+1)}(f)) \geq L_{j,k,m}(\mb{P}^{(t)}(f))$, $\forall t > 0$, which finally converge on local optimum $\mb{P}^{*}(f) = \mb{P}^{(t+1)}(f) = \mb{P}^{(t)}(f)$.
\begin{proof}
The convergence can be demonstrated by the following inequality relations.
\begin{equation}
\begin{aligned}
L_{j,k,m}&(\mb{P}^{(t+1)}(f)) = L_\text{cav}(\mb{P}^{(t+1)}(f)) + L_\text{vex}(\mb{P}^{(t+1)}(f)) \\
&> L_\text{cav}(\mb{P}^{(t+1)}(f))+ L_\text{vex}(\mb{P}^{(t)}(f)) + 
\begin{bmatrix}
P_{j}^{(t+1)}(f) - P_{j}^{(t)}(f)\\
P_{k}^{(t+1)}(f) - P_{k}^{(t)}(f)
\end{bmatrix}^T
\nabla_{(j,k)} L_\text{vex}(\mb{P}^{(t)}(f)) \\
&\geq L_\text{cav}(\mb{P}^{(t)}(f)) + 
\begin{bmatrix}
P_{j}^{(t)}(f) \\
P_{k}^{(t)}(f)
\end{bmatrix}^T
\nabla_{(j,k)} L_\text{vex}(\mb{P}^{(t)}(f))  -
\begin{bmatrix}
P_{j}^{(t)}(f) \\
P_{k}^{(t)}(f)
\end{bmatrix}^T
\nabla_{(j,k)} L_\text{vex}(\mb{P}^{(t)}(f))\\  &\quad+ L_\text{vex}(\mb{P}^{(t)}(f)) =L_\text{cav}(\mb{P}^{(t)}(f)) + L_\text{vex}(\mb{P}^{(t)}(f))  = L_{j,k,m}(\mb{P}^{(t)}(f)),
\end{aligned} 
\nonumber
\end{equation}
where the first strict inequality came from the convexity of $L_\text{vex}$, and the second inequality came from~\eqref{eq:argmax}. The algorithm converge to local optimum $\mb{P}^*(f) = \mb{P}^{(t+1)}(f) = \mb{P}^{(t)}(f)$, because $\mb{P}^{(t)}(f) \in \mc{P}_{m}^{(\ell)}$, $\forall t > 0$, and the objective function is concave on a convex and bounded set.
\end{proof}
\end{theorem}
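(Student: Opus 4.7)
The plan is to follow the classical convergence argument for the concave-convex procedure (CCCP), exploiting the specific structure of the surrogate built in~\eqref{eq:argmax}. The proof splits into two parts: (i) showing monotone non-decrease of the objective across iterations and (ii) deducing convergence from compactness of the feasible set together with boundedness of the continuous objective.

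For monotonicity, the key observation is that $L_\text{vex}(\mb{P}(f))$ is convex in the free variables $P_j(f), P_k(f)$, so it lies above its tangent hyperplane at any base point $\mb{P}^{(t)}(f)$. In particular,
\begin{equation}
L_\text{vex}(\mb{P}^{(t+1)}(f)) \;\geq\; L_\text{vex}(\mb{P}^{(t)}(f)) + \bigl[ P_j^{(t+1)}(f)-P_j^{(t)}(f),\; P_k^{(t+1)}(f)-P_k^{(t)}(f) \bigr]^T \nabla_{(j,k)} L_\text{vex}(\mb{P}^{(t)}(f)). \nonumber
\end{equation}
Adding $L_\text{cav}(\mb{P}^{(t+1)}(f))$ to both sides yields $L_{j,k,m}(\mb{P}^{(t+1)}(f))$ on the left and, on the right, the linearized surrogate evaluated at $\mb{P}^{(t+1)}(f)$. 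Since by construction $\mb{P}^{(t+1)}(f)$ is the maximizer over $\mc{P}_{m}^{(\ell)}$ of that surrogate in~\eqref{eq:argmax}, and $\mb{P}^{(t)}(f) \in \mc{P}_{m}^{(\ell)}$, the right-hand side is at least the surrogate value at $\mb{P}^{(t)}(f)$, which equals $L_\text{cav}(\mb{P}^{(t)}(f)) + L_\text{vex}(\mb{P}^{(t)}(f)) = L_{j,k,m}(\mb{P}^{(t)}(f))$. Chaining these two inequalities gives the desired monotone non-decrease.

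For convergence, I rely on the compactness and convexity of $\mc{P}_{m}^{(\ell)}$ (established in~\eqref{eq:feasible} and illustrated in Fig.~\ref{fig:feas2}) and on the continuity of $L_{j,k,m}$ on this set. The sequence $\{L_{j,k,m}(\mb{P}^{(t)}(f))\}$ is monotone non-decreasing and upper-bounded (a continuous function attains its maximum on a compact set), hence convergent. Furthermore, since $\{\mb{P}^{(t)}(f)\}$ stays in the compact set $\mc{P}_{m}^{(\ell)}$, it admits a convergent subsequence with limit point $\mb{P}^{*}(f)\in\mc{P}_{m}^{(\ell)}$. Passing to the limit in the fixed-point relation $\mb{P}^{(t+1)}(f)=\mb{P}^{(t)}(f)$ and noting that the surrogate is concave in $\mb{P}(f)$ so its maximizer is unique (or forms a convex set whose elements are all stationary), $\mb{P}^{*}(f)$ must be a stationary point of $L_{j,k,m}$ on $\mc{P}_{m}^{(\ell)}$, i.e.\ a local optimum of~\eqref{eq:maxL}.

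The main obstacle I expect is being rigorous about the last step: strictly speaking, the CCCP only guarantees convergence to a stationary point of the DC problem, not necessarily to a local maximum, and the claim that $\mb{P}^{*}(f)=\mb{P}^{(t+1)}(f)=\mb{P}^{(t)}(f)$ is achieved in finitely many iterations is generally false. A careful statement would replace the finite-iteration fixed-point equality with convergence of the subsequence together with a first-order stationarity (KKT) certificate. The author's proof glosses over this by using a strict inequality in the convexity step, which would only be valid if $L_\text{vex}$ is strictly convex or $\mb{P}^{(t+1)}(f)\neq\mb{P}^{(t)}(f)$; I would make this precise by distinguishing the two cases and concluding either strict increase or fixed-point attainment.
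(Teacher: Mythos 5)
Your proof is correct and follows essentially the same route as the paper's: the tangent-plane lower bound from convexity of $L_\text{vex}$, followed by optimality of $\mb{P}^{(t+1)}(f)$ for the linearized surrogate over $\mc{P}_m^{(\ell)}$, and then monotonicity plus compactness to conclude convergence. Your closing caveats are well taken — the paper's first inequality is written as strict, which fails when $\mb{P}^{(t+1)}(f)=\mb{P}^{(t)}(f)$ unless strict convexity is invoked, and the claim of exact fixed-point attainment in finitely many iterations is indeed stronger than what the CCCP argument actually delivers — so your more careful case distinction and stationarity statement tighten rather than depart from the paper's reasoning.
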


Finally, if $\theta_{j,f}^{(k',k)} = 0 $ or $\theta_{j,f}^{(k',k)} > 0$ and $\bar{P}_j(f) < 0$ or $\theta_{j,f}^{(k',k)} < 0$ and $\bar{P}_j(f) > P_U$, the feasible set~\eqref{eq:feasible} is empty.
Physically, this means that the strong downlink user cannot cancel the weak downlink user signal for any transmit power. Hence, it is possible to skip to an OMA configuration for the downlink, by imposing $P_{k}(f)= 0$, and computing the optimum uplink weak power $P_{j}(f)$ accordingly. 
This simplified mono-dimensional problem has the same expression of $L_\text{cav}$ and $L_\text{vex}$ given in \eqref{eq:dc}, where $P_{k}(f) = 0$. The procedure will converge to a local optimum $P_{j}(f)^*$, and its proof follows directly from Theorem~\ref{theo:cccp}.

\section{Dual variables updating}
\label{sec:dual}
\renewcommand{\theequation}{\thesection.\arabic{equation}}
\setcounter{equation}{0}

Dual variables are updated with the \emph{cutting plane ellipsoid} method for all the three algorithms SA, WA, and PRA. 
Let us focus on the update of the Lagrange variable $\bm{\mu}^{(t)}$, assuming computed the power coefficients $\mb{P}^{(t)*}$ and the allocation variables $\mb{x}^{(t)*}$, for a given value of $\bm{\mu}^{(t-1)}$. Hence, the algorithm computes the subgradient of $g(\bm{\mu})$ and updates the Lagrangian dual variables $\bm{\mu}$ according to the methodology provided in~\cite{Yu2006}.
In particular, the subgradient $d_{i,m}$, $i= 0,1,\dots, U$, $m \in \{s,w\}$ associated to the $i$-th element of $\bm{\mu}$ of the SA or WA algorithms is computed as
\begin{equation*}
\begin{aligned}
d_{0,m} =  P_{D,m}^{(\ell)} - \sum_{f=1}^{F} \sum_{k \in \mathcal{D}} x_{k,m}^*(f)  P_{k}^*(f), \quad
d_{j,m} = P_{U,j,m}^{(\ell)} - \sum_{f=1}^{F} x_{j,m}^*(f) P_{j}^*(f), \quad \forall j \in \mathcal{U},
\end{aligned}
\end{equation*}
while for the PRA algorithm, it is computed as
\begin{equation*} 
\begin{aligned}
d_{0}\hspace{-3pt} = \hspace{-3pt}P_D \hspace{-3pt} - \hspace{-3pt}\sum_{f=1}^{F} \sum_{k \in \mathcal{D}} \left( x_{k,s}^*(f) + x_{k,w}^*(f) \right)  P_{k}^*(f), \quad
d_{j}\hspace{-3pt} =\hspace{-3pt} P_U \hspace{-3pt}- \hspace{-3pt}\sum_{f=1}^{F} \left( x_{j,s}^*(f) + x_{j,w}^*(f) \right) P_{j}^*(f), \,\, \forall j \in \mathcal{U}.
\end{aligned}
\end{equation*}

\section{Proof of Theorem~\ref{theo:cccpSEQ}}
\renewcommand{\theequation}{\thesection.\arabic{equation}}
\setcounter{equation}{0}
\begin{proof}
\label{sec:dim2}
For each subcarrier, the optimization respect to $\mb{P}_D(f)$ will result in
\begin{equation} \label{eq:ine1}
\begin{aligned}
&L(\mb{P}_U^{(t)}(f), \mb{P}_D^{(t+1)}(f))
= L_\text{cav}(\mb{P_U}^{(t)}(f), \mb{P}_D^{(t+1)}(f))+ L_\text{vex} (\mb{P}_U^{(t)}(f), \mb{P}_D^{(t+1)}(f)) \\
&> L_\text{cav}(\mb{P_U}^{(t)}(f), \mb{P}_D^{(t+1)}(f)) + L_\text{vex} (\mb{P}_U^{(t)}(f), \mb{P}_D^{(t)}(f)) \\ 
&\,\,\, + (\mb{P}_D^{(t+1)}(f) - \mb{P}_D^{(t)}(f))^T \nabla_{\mb{P}_D(f)} L_\text{vex} (\mb{P}_U^{(t)}(f), \mb{P}_D^{(t)}(f)) \\ 
&\geq L_\text{cav}(\mb{P}_U^{(t)}(f), \mb{P}_D^{(t)}(f)) + L_\text{vex} (\mb{P}_U^{(t)}(f), \mb{P}_D^{(t)}(f)) + \mb{P}_D^{(t)}(f)^\text{T} \nabla_{\mb{P}_D(f)} L_\text{vex} (\mb{P}_U^{(t)}(f), \mb{P}_D^{(t)}(f)) \\
&\,\,\,  - \mb{P}_D^{(t)}(f)^\text{T} \nabla_{\mb{P}_D(f)} L_\text{vex} (\mb{P}_U^{(t)}(f), \mb{P}_U^{(t)}(f)) \\ & = L_\text{cav}(\mb{P}_U^{(t)}(f), \mb{P}_D^{(t)}(f)) + L_\text{vex} (\mb{P}_U^{(t)}(f), \mb{P}_D^{(t)}(f)) = L(\mb{P}_U^{(t)}(f), \mb{P}_D^{(t)}(f)),
\end{aligned}
\end{equation}
where the first inequality derive from the strict convexity of $L_\text{vex}$ and the second derive from the first equation of sequential programming~\eqref{eq:seq}. 
Following the same approach, for the optimization of $\mb{P}_U(f)$ we obtain
\begin{equation} \label{eq:ine2}
L(\mb{P}_U^{(t+1)}(f), \mb{P}_D^{(t+1)}(f)) > L(\mb{P}_U^{(t)}(f), \mb{P}_D^{(t+1)}(f))
\end{equation}
Together, relations~\eqref{eq:ine1} and~\eqref{eq:ine2} will yield to $L(\mb{P}_U^{(t+1)}(f), \mb{P}_D^{(t+1)}(f)) \geq L(\mb{P}_U^{(t)}(f), \mb{P}_D^{(t)}(f))$.
Taking into account that the objective function is concave on a convex and bounded set and $\mb{P}_U^{(t)} \in \mc{P}_U$, $\mb{P}_D^{(t)} \in \mc{P}_D$, $\forall l > 0$, the algorithm converges to local optimum $\mb{P}(f) = [\mb{P}_U^*(f),\mb{P}_D^*(f)]$.
\end{proof}

\end{appendices}

\bibliographystyle{IEEEtran}
\bibliography{IEEEfull,FDNOMA}

\begin{thebibliography}{10}
\providecommand{\url}[1]{#1}
\csname url@samestyle\endcsname
\providecommand{\newblock}{\relax}
\providecommand{\bibinfo}[2]{#2}
\providecommand{\BIBentrySTDinterwordspacing}{\spaceskip=0pt\relax}
\providecommand{\BIBentryALTinterwordstretchfactor}{4}
\providecommand{\BIBentryALTinterwordspacing}{\spaceskip=\fontdimen2\font plus
\BIBentryALTinterwordstretchfactor\fontdimen3\font minus
  \fontdimen4\font\relax}
\providecommand{\BIBforeignlanguage}[2]{{%
\expandafter\ifx\csname l@#1\endcsname\relax
\typeout{** WARNING: IEEEtran.bst: No hyphenation pattern has been}%
\typeout{** loaded for the language `#1'. Using the pattern for}%
\typeout{** the default language instead.}%
\else
\language=\csname l@#1\endcsname
\fi
#2}}
\providecommand{\BIBdecl}{\relax}
\BIBdecl

\bibitem{Wong2017}
V.~W. Wong, \emph{Key technologies for {5G} wireless systems}.\hskip 1em plus
  0.5em minus 0.4em\relax Cambridge university press, 2017.

\bibitem{Sabharwal14}
A.~Sabharwal, P.~Schniter, D.~Guo, D.~W. Bliss, S.~Rangarajan, and R.~Wichman,
  ``In-band full-duplex wireless: Challenges and opportunities,'' \emph{IEEE J.
  Sel. Areas Commun.}, vol.~32, no.~9, pp. 1637--1652, February 2014.

\bibitem{Zhang2015}
Z.~Zhang, X.~Chai, K.~Long, A.~V. Vasilakos, and L.~H. Hanzo, ``Full duplex
  techniques for {5G} networks: self-interference cancellation, protocol
  design, and relay selection,'' \emph{IEEE Comm. Mag.}, vol.~53, pp. 128--137,
  2015.

\bibitem{Saito13}
Y.~Saito, A.~Kishiyama, T.~Benejebbour, A.~Nakamura, A.~Li, and K.~Higuchi,
  ``Non-orthogonal multiple access {(NOMA)} for cellular future radio access,''
  in \emph{IEEE Vehicular Technology Conference (VTC)}, June 2013, pp. 1--5.

\bibitem{Ding2016a}
Z.~Ding, R.~Schober, and H.~V. Poor, ``A general {MIMO} framework for {NOMA}
  downlink and uplink transmission based on signal alignment,'' \emph{IEEE
  Trans. Wireless Commun.}, vol.~15, no.~6, pp. 4438--4454, Jun. 2016.

\bibitem{Ding2018}
Z.~Ding, M.~Xu, Y.~Chen, M.~Peng, and H.~V. Poor, ``Embracing non-orthogonal
  multiple access in future wireless networks,'' \emph{Frontiers of Information
  Technology \& Electronic Engineering}, vol.~19, no.~3, pp. 322--339, 2018.

\bibitem{Zhang2019}
D.~{Zhang}, Y.~{Liu}, L.~{Dai}, A.~K. {Bashir}, A.~{Nallanathan}, and
  B.~{Shim}, ``Performance analysis of {FD-NOMA}-based decentralized {V2X}
  systems,'' \emph{IEEE Trans. Commun.}, vol.~67, no.~7, pp. 5024--5036, Jul.
  2019.

\bibitem{Nguyen2019}
X.~Nguyen and D.~Do, ``System performance of cooperative {NOMA} with
  full-duplex relay over nakagami-m fading channels,'' \emph{Mobile Information
  Systems}, vol. 2019, 2019.

\bibitem{Wong1999}
C.~Y. Wong, R.~S. Cheng, K.~B. Lataief, and R.~D. Murch, ``Multiuser {OFDM}
  with adaptive subcarrier, bit, and power allocation,'' \emph{IEEE J Sel Areas
  Commun}, vol.~17, no.~10, pp. 1747--1758, Oct. 1999.

\bibitem{Yu2006}
W.~Yu and R.~Lui, ``Dual methods for nonconvex spectrum optimization of
  multicarrier systems,'' \emph{IEEE Trans. Commun.}, vol.~54, no.~7, pp.
  1310--1322, Jul. 2006.

\bibitem{Su2008}
H.~Su and X.~Zhang, ``Cross-layer based opportunistic {MAC} protocols for {QoS}
  provisionings over cognitive radio wireless networks,'' \emph{IEEE J Sel
  Areas Commun}, vol.~26, no.~1, pp. 118--129, Jan. 2008.

\bibitem{Xie2012}
R.~Xie, F.~R. Yu, H.~Ji, and Y.~Li, ``Energy-efficient resource allocation for
  heterogeneous cognitive radio networks with femtocells,'' \emph{IEEE Trans.
  Wireless Commun.}, vol.~11, no.~11, pp. 3910--3920, Nov. 2012.

\bibitem{Zhang2014}
H.~Zhang, C.~Jiang, N.~C. Beaulieu, X.~Chu, X.~Wen, and M.~Tao, ``Resource
  allocation in spectrum-sharing {OFDMA} femtocells with heterogeneous
  services,'' \emph{IEEE Trans. Commun.}, vol.~62, no.~7, pp. 2366--2377, 2014.

\bibitem{Peng2014}
M.~Peng, K.~Zhang, J.~Jiang, J.~Wang, and W.~Wang, ``Energy-efficient resource
  assignment and power allocation in heterogeneous cloud radio access
  networks,'' \emph{IEEE Trans. Veh. Technol.}, vol.~64, no.~11, pp.
  5275--5287, Nov. 2014.

\bibitem{Ng2012}
D.~W.~K. Ng, E.~S. Lo, and R.~Schober, ``Energy-efficient resource allocation
  in {OFDMA} systems with large numbers of base station antennas,'' \emph{IEEE
  Trans. Wireless Commun.}, vol.~11, no.~9, pp. 3292--3304, Sep. 2012.

\bibitem{Liang2017}
L.~Liang, G.~Y. Li, and W.~Xu, ``Resource allocation for {D2D}-enabled
  vehicular communications,'' \emph{IEEE Trans. Commun.}, vol.~65, no.~7, pp.
  3186--3197, Jul. 2017.

\bibitem{Wang2018}
H.~Wang, J.~Wang, G.~Ding, L.~Wang, T.~A. Tsiftsis, and P.~K. Sharma,
  ``Resource allocation for energy harvesting-powered {D2D} communication
  underlaying {UAV}-assisted networks,'' \emph{IEEE Trans. Green Commun. and
  Network.}, vol.~2, no.~1, 2018.

\bibitem{DellaPenda2019}
D.~{Della Penda}, A.~{Abrardo}, M.~{Moretti}, and M.~{Johansson}, ``Distributed
  channel allocation for {D2D}-enabled {5G} networks using potential games,''
  \emph{{IEEE} Access}, vol.~7, pp. 11\,195--11\,208, Jan. 2019.

\bibitem{Zeng2018}
F.~Zeng, Q.~Li, Z.~Xiao, V.~Havyarimana, and J.~Bai, ``A price-based
  optimization strategy of power control and resource allocation in full-duplex
  heterogeneous macrocell-femtocell networks,'' \emph{IEEE Access}, vol.~6, pp.
  42\,004--42\,013, 2018.

\bibitem{Abrardo2018}
A.~Abrardo and M.~Moretti, ``Optimal power allocation for full-duplex
  communications over {OFDMA} cellular networks,'' in \emph{{IEEE ICC}
  Workshops, Kansas City USA}, 2018.

\bibitem{Zhang2019a}
G.~Zhang, H.~Zhang, Z.~Han, and G.~K. Karagiannidis, ``Spectrum allocation and
  power control in full-duplex ultra-dense heterogeneous networks,'' \emph{IEEE
  Trans Commun}, vol.~67, no.~6, pp. 4365--4380, Jun. 2019.

\bibitem{Zeng2019}
M.~Zeng, A.~Yadav, O.~A. Dobre, and H.~V. Poor, ``Energy-efficient joint
  user-{RB} association and power allocation for uplink hybrid {NOMA-OMA},''
  \emph{IEEE Internet Things J.}, vol.~6, no.~3, pp. 5119--5131, Mar. 2019.

\bibitem{Muhammed2019}
A.~J. Muhammed, Z.~Ma, P.~D. Diamantoulakis, L.~Li, and G.~K. Karagiannidis,
  ``Energy-efficient resource allocation in multicarrier {NOMA} systems with
  fairness,'' \emph{IEEE Trans. Commun.}, vol.~67, no.~12, pp. 8639--8654, Dec.
  2019.

\bibitem{Al-Eryani2019}
Y.~Al-Eryani, E.~Hossain, and D.~I. Kim, ``Generalized coordinated multipoint
  {(GCoMP)-enabled NOMA}: Outage, capacity, and power allocation,'' \emph{IEEE
  Trans. Commun.}, vol.~67, no.~11, pp. 7923--7936, Nov. 2019.

\bibitem{Saggese2019}
F.~{Saggese}, M.~{Moretti}, and A.~{Abrardo}, ``A quasi-optimal clustering
  algorithm for {MIMO-NOMA} downlink systems,'' \emph{IEEE Wireless
  Communications Letters}, vol.~9, no.~2, pp. 152--156, 2020.

\bibitem{Ding2018a}
Z.~Ding, P.~Fan, and H.~V. Poor, ``On the coexistence between full-duplex and
  {NOMA},'' \emph{IEEE Wireless Commun. Lett.}, vol.~7, no.~5, pp. 692--695,
  May 2018.

\bibitem{Sun2018}
Y.~Sun, D.~W.~K. Ng, J.~Zhu, and R.~Schober, ``Robust and secure resource
  allocation for full-duplex {MISO} multicarrier {NOMA} systems,'' \emph{IEEE
  Trans. Commun.}, vol.~66, no.~9, pp. 4119--4137, Sep. 2018.

\bibitem{Sun2017}
Y.~Sun, D.~W.~K. Ng, Z.~Ding, and R.~Schober, ``Optimal joint power and
  subcarrier allocation for full-duplex multicarrier non-orthogonal multiple
  access systems,'' \emph{{IEEE} Trans. Commun.}, vol.~65, no.~3, pp.
  1077--1091, March 2017.

\bibitem{Nguyen2019a}
H.~V. Nguyen, V.-D. Nguyen, O.~A. Dobre, D.~N. Nguyen, E.~Dutkiewicz, and O.-S.
  Shin, ``Joint power control and user association for {NOMA}-based full-duplex
  systems,'' \emph{IEEE Trans. Commun.}, vol.~67, no.~11, pp. 8037--8055, Nov.
  2019.

\bibitem{Aswathi2019}
V.~Aswathi and A.~Babu, ``Full/half duplex cooperative {NOMA} under imperfect
  successive interference cancellation and channel state estimation errors,''
  \emph{IEEE Access}, vol.~7, pp. 179\,961--179\,984, 2019.

\bibitem{Wei2017}
Z.~Wei, D.~W.~K. Ng, J.~Yuan, and H.-M. Wang, ``Optimal resource allocation for
  power-efficient {MC-NOMA} with imperfect channel state information,''
  \emph{IEEE Trans. Commun.}, vol.~65, no.~9, pp. 3944--3961, Sep. 2017.

\bibitem{Celik2019}
A.~Celik, M.-C. Tsai, R.~M. Radaydeh, F.~S. Al-Qahtani, and M.-S. Alouini,
  ``Distributed user clustering and resource allocation for imperfect {NOMA} in
  heterogeneous networks,'' \emph{IEEE Trans. Commun.}, vol.~67, no.~10, Oct.
  2019.

\bibitem{Xiao17}
S.~Xiao, X.~Zhou, Y.~Yuan-Wu, G.~Y. Li, and W.~Guo, ``Robust resource
  allocation in full-duplex-enabled {OFDMA} femtocell networks,'' \emph{IEEE
  Trans Wireless Commun}, vol.~16, no.~10, pp. 6382--6394, October 2017.

\bibitem{moretti2015}
M.~Moretti, L.~Sanguinetti, and X.~Wang, ``Resource allocation for power
  minimization in the downlink of {THP}-based spatial multiplexing {MIMO-OFDMA}
  systems,'' \emph{IEEE Trans. Veh. Technol.}, vol.~64, pp. 405--411, 2015.

\bibitem{Xu2017}
Y.~Xu and W.~Yin, ``A globally convergent algorithm for nonconvex optimization
  based on block coordinate update,'' \emph{J. Sci. Comput.}, vol.~72, no.~2,
  pp. 700--734, 2017.

\bibitem{Abrardo2017}
A.~Abrardo and M.~Moretti, ``Distributed power allocation for {D2D}
  communications underlaying/overlaying {OFDMA} cellular networks,'' \emph{IEEE
  Trans. Wireless Commun.}, vol.~16, no.~3, pp. 1466--1479, March 2017.

\bibitem{Yuille}
A.~L. Yuille and A.~Rangarajan, ``The {C}oncave-{C}onvex {P}rocedure
  ({CCCP}),'' \emph{Neural Computation}, vol.~15, no.~4, pp. 915--936, April
  2003.

\bibitem{Abrardo2019}
{F. Saggese}, {A. Abrardo}, {M. Moretti}, and {M. Morelli}, ``Low complexity
  {WMMSE} power allocation in {NOMA-FD} systems,''
  \emph{https://arxiv.org/abs/1905.06681}, 2019.

\bibitem{Zhang2010}
L.~P. {Qian} and Y.~J. {Zhang}, ``{S-MAPEL}: Monotonic optimization for
  non-convex joint power control and scheduling problems,'' \emph{IEEE
  Transactions on Wireless Communications}, vol.~9, no.~5, pp. 1708--1719, May
  2010.

\bibitem{Moretti13}
M.~Moretti and A.~I. Perez-Neira, ``Efficient margin adaptive scheduling for
  {MIMO-OFDMA} systems,'' \emph{IEEE Trans. Wireless Commun.}, vol.~12, no.~1,
  pp. 8580--8594, January 2013.

\bibitem{Goyal17}
S.~Goyal, P.~Liu, and S.~S. Panwar, ``User selection and power allocation in
  full-duplex multicell networks,'' \emph{IEEE Trans. Veh. Technol.}, vol.~66,
  no.~3, pp. 2408--2422, March 2017.

\end{thebibliography}

\end{document}